\theoremstyle{plain}
\newtheorem{theorem}{Theorem}
\newtheorem{lemma}{Lemma}
\newtheorem{conjecture}{Conjecture}
\theoremstyle{definition}
\newtheorem{definition}{Definition}
\newtheorem{sketch}{Sketch}
\newtheorem{example}{Example}
\theoremstyle{remark}
\newtheorem{remark}{Remark}
\DeclarePairedDelimiter{\ceil}{\lceil}{\rceil}
\DeclarePairedDelimiter{\norm}{\|}{\|}
\newcommand{\ind}[1]{\mathbbm{1}\left[{#1}\right]}
\newcommand{\pr}{\mathbb{P}}
\newcommand{\E}{\mathbb{E}}
\newcommand{\Exp}{\mathrm{Exp}}
\newcommand{\Uniform}{\mathrm{Uniform}}
\newcommand{\Poisson}{\mathrm{Poisson}}
\newcommand{\R}{\mathbb{R}}
\newcommand{\Z}{\mathbb{Z}}
\newcommand{\N}{\mathbb{N}}
\newcommand{\istrut}[2][0]{\rule[- #1 mm]{0mm}{#1 mm}\rule{0mm}{#2 mm}}
\newcommand{\Min}{\textsf{Min}}
\newcommand{\poly}{\text{poly}}
\newcommand{\erf}{\operatorname{erf}}
\newcommand{\Levy}{L\'{e}vy}
\newcommand{\Saglam}{Sa\v{g}lam}
\newcommand{\bydef}{\stackrel{\text{def}}{=}}
\newcommand{\Update}{\textsf{Update}}
\newcommand{\ParetoSampler}{\textsf{ParetoSampler}}
\newcommand{\Pareto}{\textsf{Pareto}}
\newcommand{\Sample}{\textsf{Sample}}
\newcommand{\argmin}{\operatorname{argmin}}
\newcommand{\Sampler}{\textsf{Sampler}}
\newcommand{\SamplerWOR}{\textsf{Sampler-WOR}}
\newcommand{\ParetoSamplerWOR}{\textsf{ParetoSampler-WOR}}
\newcommand{\KMIN}{k\textsf{-Min}}
\newcommand{\EdgeSampler}{\textsf{Edge-Sampler}}
\newcommand{\ignore}[1]{}
\begin{document}

\title{Universal Perfect Samplers for Incremental Streams\thanks{This work was supported by NSF Grant CCF-2221980.}}

 \author{Seth Pettie\\
 University of Michigan
 \and
 Dingyu Wang\\
 University of Michigan}

\date{}

\maketitle
\begin{abstract}
    If $G : \R_+ \to \R_+$, the \emph{$G$-moment} of a vector $\mathbf{x}\in\R_+^n$ is 
    $G(\mathbf{x}) \bydef \sum_{v\in[n]} G(\mathbf{x}(v))$ and
    the \emph{$G$-sampling} problem is to select an index $v_*\in [n]$
    according to its contribution to the $G$-moment, i.e., 
    such that $\pr(v_*=v) = G(\mathbf{x}(v))/G(\mathbf{x})$.  \emph{Approximate} 
    $G$-samplers may introduce multiplicative and/or additive errors
    to this probability, and some have a non-trivial probability of failure.

    In this paper we focus on the \emph{exact} $G$-sampling problem, 
    where $G$ is selected from the following class of functions.
    \[
    \mathcal{G}=\left\{G(z)=c\ind{z>0}+\gamma_0z+\int_0^\infty (1-e^{-rz})\,\nu(dr) \;\middle|\; c,\gamma_0\geq 0,\nu\text{ is non-negative}\right\}.
    \]
    The class $\mathcal{G}$ is perhaps more natural than it looks.  It captures all Laplace exponents of non-negative, one-dimensional \Levy{} processes,
    and includes several well studied classes such as 
    $p$th moments $G(z)=z^p$, $p\in[0,1]$,
    logarithms $G(z)=\log(1+z)$,
    Cohen and Geri's~\cite{CohenG19} \emph{soft concave sublinear} functions, which are used to approximate \emph{concave sublinear} functions, including \emph{cap statistics}.

\medskip 

    In this paper we develop $G$-samplers for a vector $\mathbf{x} \in \R_+^n$ 
    that is presented as an incremental stream of positive updates.  In particular:
    \begin{itemize}
    \item For any $G\in\mathcal{G}$, we give a very simple $G$-sampler that uses 2 words of memory and stores at all times a $v_*\in [n]$, such that $\pr(v_*=v)$ is  \emph{exactly} $G(\mathbf{x}(v))/G(\mathbf{x})$.

    \item We give a ``universal'' $\mathcal{G}$-sampler that 
    uses $O(\log n)$ words of memory w.h.p., and given any $G\in \mathcal{G}$
    at query time, produces an exact $G$-sample.
    \end{itemize}

    With an overhead of a factor of $k$, both samplers can be used to 
    $G$-sample a sequence of $k$ indices with or without replacement. 

\medskip 

    Our sampling framework is simple and versatile, 
    and can easily be generalized to sampling from more complex objects like graphs and hypergraphs.  
\end{abstract}

\section{Introduction}\label{sect:introduction}

We consider a vector $\mathbf{x}\in\R_+^n$, initially zero, 
that is subject to a stream of incremental (positive) updates:\footnote{$\R_+$ is the set of non-negative reals.}
\begin{itemize}
\item[] $\Update(v,\Delta)$ : Set $\mathbf{x}(v)\gets \mathbf{x}(v) + \Delta$, 
where $v\in [n], \Delta>0$.
\end{itemize}

The \emph{$G$-moment} of $\mathbf{x}$ is $G(\mathbf{x}) = \sum_{v\in[n]} G(\mathbf{x}(v))$ and a \emph{$G$-sampler} 
is a data structure that returns an index $v_*$ proportional 
to its contribution to the $G$-moment.

\begin{definition}[Approximate/Perfect/Truly Perfect 
$G$-samplers \cite{JayaramW21,JayaramWZ22,Jayaram21-PhDthesis}]\label{def:sampler}
Let $G:\R_+\to \R_+$ be a function.
An \emph{approximate $G$-sampler} with parameters $(\epsilon,\eta,\delta)$ is a sketch 
of $\mathbf{x}$ that can produce an 
index $v_*\in[n] \cup \{\perp\}$ 
such that $\pr(v_*=\,\perp)\leq \delta$ 
($v_*=\,\perp$ is failure) and
\[
\pr(v_* = v \mid v_* \neq\, \perp) \in (1 \pm \epsilon)G(\mathbf{x}(v))/G(\mathbf{x}) \pm \eta.
\]
If $(\epsilon,\eta) = (0,1/\poly(n))$ we say the sampler
is \emph{perfect} and if $(\epsilon,\eta)=(0,0)$ 
it is \emph{truly perfect}.
\end{definition}

In this paper we work in the \emph{random oracle} model
and assume we have access to a uniformly random hash function 
$H : [n]\to [0,1]$.

\subsection{Prior Work}

Much of the prior work on this problem 
considered $L_p$-samplers, $p\in [0,2]$,
in the more general turnstile model, 
i.e., $\mathbf{x}$ is subject to positive 
and negative updates and $G(z)=|z|^p$.
We survey this line of work, 
then review $G$-samplers in incremental streams.

\paragraph{$L_p$-Sampling from Turnstile Streams.}
Monemizadeh and Woodruff~\cite{MonemizadehW10}
introduced the first $\poly(\epsilon^{-1},\log n)$-space $L_p$-sampler.
(Unless stated otherwise, $\eta=1/\poly(n)$.)
These bounds were improved by Andoni, Krauthgamer,
and Onak~\cite{AndoniKO11} and then 
Jowhari, \Saglam, and Tardos~\cite{JowhariST11},
who established an upper bound of
$O(\epsilon^{-\max\{1,p\}}\log^2 n\log \delta^{-1})$ bits,
and an $\Omega(\log^2 n)$-bit lower bound 
whenever $\epsilon,\delta<1$.
Jayaram and Woodruff~\cite{JayaramW21} proved that
efficient $L_p$-samplers need not be approximate,
and that there are \emph{perfect} $L_p$ samplers 
occupying $O(\log^2 n\log\delta^{-1})$ bits when $p\in[0,2)$ 
and $O(\log^3 n\log\delta^{-1})$ bits when $p=2$.  
By concatenating $k$ independent $L_p$-samplers
one gets, in expectation, 
$(1-\delta)k$ independent samples \emph{with replacement}.
Cohen, Pagh, and Woodruff~\cite{CohenPW20} gave an
$O(k\poly(\log n))$-bit sketch that perfectly 
samples $k$ indices \emph{without replacement}.
Jayaram, Woodruff, and Zhou~\cite{JayaramWZ22} studied the distinction between \emph{perfect}
and \emph{truly perfect} samplers, proving that in turnstile streams,
perfect samplers require $\Omega(\min\{n,\log\eta^{-1}\})$-bits,
i.e., truly perfect sampling with non-trivial 
space is impossible.

\paragraph{$G$-Sampling from Incremental Streams.}
In incremental streams it is straightforward to 
sample according to the $F_0$ or $F_1$ frequency moments
(i.e., $G$-sampling with $G(z)=\ind{z>0}$ and $G(z)=z$, resp.\footnote{$\ind{\mathcal{E}}\in\{0,1\}$ is the indicator variable for the event/predicate $\mathcal{E}$.})
with a \Min-sketch~\cite{Cohen97} 
or reservoir sampling~\cite{vitter1985random}, respectively.
Jayaram, Woodruff, and Zhou~\cite{JayaramWZ22} gave 
truly perfect $G$-samplers for any monotonically increasing
$G:\N\to\R_+$ with $G(0)=0$, though the space used is
$\Omega(\frac{\|\mathbf{x}\|_1}{G(\mathbf{x})}\log\delta^{-1})$, 
which in many situations is $\Omega(\poly(n)\log\delta^{-1})$.\footnote{For example, take $G(z)=\sqrt{z}$ and $\mathbf{x}(1)=\cdots=\mathbf{x}(n)=n$,
then $\frac{\|\mathbf{x}\|_1}{G(\mathbf{x})}=n^2/n^{3/2}=\sqrt{n}$.}

Cohen and Geri~\cite{CohenG19} 
were interested in $G$-samplers for the class
of concave sublinear functions (\textsf{CSF}), 
which are those that can be expressed as
\[
\textsf{CSF} = \left\{G(z) = \int_0^\infty \min\{1,zt\}\nu(dt) \;\middle|\; \mbox{non-negative $\nu$}\right\}.
\]
This class can be approximated up to a constant factor
by the \emph{soft concave sublinear} functions (\textsf{SoftCSF}), 
namely those of the form
\[
\textsf{SoftCSF} = \left\{G(z) = \int_0^\infty (1-e^{-zt})\nu(dt) 
\;\middle|\; \mbox{non-negative $\nu$}\right\}.
\]
Cohen and Geri~\cite{CohenG19} developed 
\emph{approximate} $G$-samplers for $G\in \textsf{SoftCSF}$, 
a class that includes $F_p$ moments ($G(z)=z^p$),
for $p\in (0,1)$, and $G(z)=\log(1+z)$.
In their scheme there is a linear 
tradeoff between accuracy and update time.
Refer to Cohen~\cite{Cohen23-sampling-survey}
for a comprehensive survey of 
sampling from data streams and applications.

\subsection{New Results}

In this paper we build \emph{truly perfect} $G$-samplers 
with $(\epsilon,\eta,\delta)=(0,0,0)$ for any $G\in\mathcal{G}$.
\[
    \mathcal{G}=\left\{G(z)=c\ind{z>0}+\gamma_0z+\int_0^\infty (1-e^{-rz})\,\nu(dr)\right\},
\]
such that $c,\gamma_0\geq 0,$ $\nu$ 
is non-negative, and $\int_0^\infty \min\{t,1\}\nu(dt)<\infty$.

The class $\mathcal{G}$ is essentially the same as \textsf{SoftCSF},
but it is, in a sense, the ``right'' definition. 
According to the \Levy-Khintchine representation of \Levy{} processes, there is a bijection between the functions of $\mathcal{G}$ and the Laplace exponents of non-negative, one-dimensional \Levy{} processes, aka \emph{subordinators},
where the parameters $c,\gamma_0,\nu$ are referred to
as the \emph{killing rate}, the \emph{drift}, and the \emph{\Levy{} measure}, respectively.  
(\Levy{} processes and the \Levy-Khintchine representation are reviewed in~\cref{sect:Levy}.)

The connection between $\mathcal{G}$ and non-negative 
\Levy{} processes allows us to build simple, truly perfect samplers.
Given a $G\in\mathcal{G}$, let
$(X_t)_{t\ge 0}$, $X_t\in \R_+$,
be the corresponding \Levy{} process.
We define the
\emph{\Levy-induced level function} 
$\ell_G : \R_+ \times [0,1] \to \R_+$ to be
\[
\ell_G(a,b) = \inf\{t \;\mid\; \pr(X_t \geq a) \geq b\}.
\]

\alglanguage{pseudocode}
\begin{algorithm}[H]
\caption{Generic Perfect $G$-{\Sampler}}\label{alg:G-sampler}
{\bfseries Specifications:} The only state is a pair $(v_*,h_*)\in [n]\cup\{\perp\}\times \R_+\cup\{\infty\}$.  Initially $(v_*,h_*)=(\perp,\infty)$ and (implicitly) $\mathbf{x}=0^n$.  After processing a stream of vector updates $\{(v_i,\Delta_i)\}_i$, $(v_i,\Delta_i)\in [n]\times \R_+$,
$\pr(v_*=v) = G(\mathbf{x}(v))/G(\mathbf{x})$.  $H : [n]\to[0,1]$ is a hash function.\vspace{.2cm}
\begin{algorithmic}[1]
\Procedure{\Update}{$v,\Delta$} \Comment{$\mathbf{x}(v) \gets \mathbf{x}(v)+\Delta$}
\State Generate fresh $Y\sim \Exp(1)$.
\State $h \gets \ell_G(Y/\Delta,H(v))$ \Comment{$\ell_G$ is level function of $G$}
\If{$h<h_*$}
    \State $(v_*,h_*) \gets (v,h)$
\EndIf\EndProcedure
\end{algorithmic}
\end{algorithm}

The generic $G$-\Sampler{} (\cref{alg:G-sampler})
uses $\ell_G$ to sample an index 
$v$ proportional to $G(\mathbf{x}(v))$
with just \emph{2 words}\footnote{We assume a word stores an 
index in $[n]$ or a value in $\R_+$.  
See \cref{rem:precision} in \cref{sec:proofs-main-lemma-theorems} 
for a discussion of bounded-precision implementations.} of memory.
\begin{theorem}[$G$-Sampler]\label{thm:generic-G-sampler}
Fix any $G\in\mathcal{G}$. 
The generic $G$-\Sampler{} stores a pair 
$(v_*,h_*)\in[n]\times \R_+$ such that at all times,
$\pr(v_*=v) = G(\mathbf{x}(v))/G(\mathbf{x})$, i.e.,
it is a truly perfect $G$-sampler with zero probability of failure.
\end{theorem}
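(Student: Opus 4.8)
The plan is to first collapse all updates of a fixed index into a single ``effective'' update, then compute the exact law of the score $h$ that index $v$ contributes to the state, show this law is $\Exp(G(\mathbf{x}(v)))$, and finally read off $\pr(v_*=v)$ from the memorylessness of the exponential (an ``exponential race''). Throughout I assume $G\not\equiv 0$, so that $G(\mathbf{x})>0$ and the statement is non-vacuous.

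\paragraph{Step 1: collapse the updates of a fixed index.} Fix $v$ with updates $\Delta_1,\dots,\Delta_m$, so $\mathbf{x}(v)=\sum_i\Delta_i$; update $i$ draws a fresh $Y_i\sim\Exp(1)$ and yields $h_i=\ell_G(Y_i/\Delta_i,H(v))$, and the value the sketch keeps for $v$ is $\min_i h_i$. I would first record that $s\mapsto\pr(X_s\ge a)$ is non-decreasing (subordinator paths are monotone) and right-continuous (as $s\downarrow t$ we have $X_s\downarrow X_t$ a.s., hence $\ind{X_s\ge a}\to\ind{X_t\ge a}$ a.s.), so $\ell_G(\cdot,b)$ is non-decreasing and, off a null set of $(a,b)$, $\ell_G(a,b)\le t\iff\pr(X_t\ge a)\ge b$. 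Monotonicity in the first argument gives $\min_i\ell_G(Y_i/\Delta_i,H(v))=\ell_G(\min_i Y_i/\Delta_i,H(v))$, and since $Y_i/\Delta_i\sim\Exp(\Delta_i)$ independently, $\min_i Y_i/\Delta_i\sim\Exp(\mathbf{x}(v))$. Hence the value kept for $v$ has the same law as $\ell_G(Y_v/\mathbf{x}(v),H(v))$ with a single fresh $Y_v\sim\Exp(1)$, and these are mutually independent over $v$ (the $Y$'s are fresh; in the random-oracle model the $H(v)$'s are i.i.d.\ $\Uniform[0,1]$).

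\paragraph{Step 2: the score of one index is exponential.} For one index with weight $w=\mathbf{x}(v)>0$ and score $h=\ell_G(Y/w,U)$, $Y\sim\Exp(1)$ and $U\sim\Uniform[0,1]$ independent, I would compute
\[
\pr(h\le t)=\pr\big(\pr(X_t\ge Y/w\mid Y)\ge U\big)=\E\big[\pr(X_t\ge Y/w\mid Y)\big]=\pr(Y\le wX_t)=\E\big[1-e^{-wX_t}\big]=1-e^{-tG(w)},
\]
using in order the Step-1 equivalence, $\pr(U\le p)=p$, Fubini, $Y\sim\Exp(1)$, and the \Levy--Khintchine identity $\E[e^{-\lambda X_t}]=e^{-tG(\lambda)}$ for the subordinator whose Laplace exponent is $G$ (the killing term $c\,\ind{z>0}$ enters through $\pr(X_t=\infty)=1-e^{-ct}$). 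Thus $h\sim\Exp(G(w))$, and since $G(w)>0$ we have $h<\infty$ a.s.

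\paragraph{Step 3: exponential race, and the main obstacle.} By Steps 1--2, the scores $\{h^{(v)}:\mathbf{x}(v)>0\}$ are independent with $h^{(v)}\sim\Exp(G(\mathbf{x}(v)))$, and the sketch stores $v_*=\argmin_v h^{(v)}$ (an index that never received an update cannot overwrite the state). Independent exponentials take distinct values a.s., so there is no failure and $\pr(v_*=v)=G(\mathbf{x}(v))/\sum_u G(\mathbf{x}(u))=G(\mathbf{x}(v))/G(\mathbf{x})$; since this uses only the multiset of updates seen so far, the invariant holds at all times. I expect the only real obstacle to be the handling of $\ell_G$ in Step 1: checking it is well defined (the set $\{s:\pr(X_s\ge a)\ge b\}$ is a closed half-line or empty), that the equivalence $\ell_G(a,b)\le t\iff\pr(X_t\ge a)\ge b$ holds off a null set of $(a,b)$, and that boundary cases ($a=0$, atoms of $X_t$, the killed regime $X_t=\infty$) do not corrupt the integral in Step 2; the remainder is routine.
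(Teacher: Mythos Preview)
Your proposal is correct and follows essentially the same route as the paper. The paper factors out your Step~2 as a separate key lemma (\cref{lem:level}): it first proves, for any $G\in\mathcal{G}$, that the \Levy-induced level function $\ell_G$ is 2D-monotone and satisfies the $G$-transformation property $\ell_G(Y,U)\sim\Exp(G(\lambda))$ when $Y\sim\Exp(\lambda)$, $U\sim\Uniform(0,1)$, and then the proof of \cref{thm:generic-G-sampler} simply invokes that lemma for each index $v$ and finishes with the exponential-race argument exactly as you do. Your Step~2 computation is the same Laplace-transform/\Levy--Khintchine calculation the paper does in the proof of \cref{lem:level} (you compute $\pr(h\le t)$, the paper computes $\pr(h\ge w)$), and your Step~1 is the same collapse via monotonicity of $\ell_G$ in its first argument plus $\min_i\Exp(\Delta_i)\sim\Exp(\sum_i\Delta_i)$; you also supply a bit more care than the paper about right-continuity of $s\mapsto\pr(X_s\ge a)$ and the equivalence $\ell_G(a,b)\le t\iff\pr(X_t\ge a)\ge b$, which the paper glosses over.
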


Since $\ell_G(a,b)$ is increasing in both arguments,
among all updates $\{(v_i,\Delta_i)\}$ to the 
generic $G$-\Sampler, the stored sample must 
correspond to a point on the (minimum) \emph{Pareto frontier} of 
$\{(Y_i/\Delta_i,H(v_i))\}$.  Thus, it is possible to 
produce a $G$-sample for any $G\in\mathcal{G}$ 
simply by storing the Pareto frontier.
(This observation was also used by Cohen~\cite{Cohen18} 
in her approximate samplers.)
The size of the Pareto frontier is a random variable
that is less than $\ln n+1$ in expectation and 
$O(\log n)$ with high probability.

\begin{theorem}\label{thm:ParetoSampler}
Suppose \ParetoSampler{} processes a stream of
$\poly(n)$ updates to $\mathbf{x}$.
The maximum space used is $O(\log n)$ 
words with probability $1-1/\poly(n)$.
At any time, given a $G\in\mathcal{G}$, 
it can produce a $v_*\in[n]$ such that
$\pr(v_*=v)=G(\mathbf{x}(v))/G(\mathbf{x})$.
\end{theorem}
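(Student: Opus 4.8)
The plan is to combine \cref{thm:generic-G-sampler} with a records-counting argument, the first part giving correctness and the second the space bound.

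\textbf{Correctness.} On a $G$-query, \ParetoSampler{} should reproduce exactly the index that the generic $G$-\Sampler{} of \cref{alg:G-sampler} would hold, so that \cref{thm:generic-G-sampler} applies verbatim. The generic sampler holds $\argmin_i \ell_G(Y_i/\Delta_i, H(v_i))$ over all updates $i$, ties broken toward the earliest update. Since $\ell_G(a,b)=\inf\{t\mid \pr(X_t\ge a)\ge b\}$ and the set $\{t\mid \pr(X_t\ge a)\ge b\}$ only shrinks as $a$ or $b$ grows, $\ell_G$ is non-decreasing in each coordinate; hence (i) among the updates to a fixed $v$, the smallest $\ell_G$-value is $\ell_G(a_v,H(v))$ with $a_v=\min_{i:v_i=v}Y_i/\Delta_i$, and (ii) the overall minimizer is attained at a coordinate $v$ whose point $(a_v,H(v))$ is Pareto-minimal in $\{(a_u,H(u))\}_u$. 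So it suffices for \ParetoSampler{} to keep, per Pareto-minimal coordinate, the pair $(v,a_v)$ together with the arrival time of the minimizing update; on a $G$-query it recomputes the $\ell_G$-values and returns the $\argmin$. The one delicate point is that $\ell_G$ need not be injective (it is flat for pure-drift or pure-killing $G$), so we break ties exactly as in \cref{alg:G-sampler}, keeping the earliest relevant update, making the two samplers' states identical; for generic inputs such ties occur with probability $0$ anyway, the $H(v)$ being i.i.d.\ continuous.

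\textbf{Space.} Condition on the (adversarial) stream $\{(v_i,\Delta_i)\}$, so the only randomness is the i.i.d.\ $Y_i\sim\Exp(1)$ and the i.i.d.\ $H(v)\sim\Uniform[0,1]$, and these two families are independent. Let $n'\le n$ be the number of distinct coordinates ever updated. Each $a_v$ is a function of the $Y_i$'s alone, hence \emph{independent} of the whole family $(H(v))_v$, and the $a_v$ are a.s.\ pairwise distinct. Order the coordinates by increasing $a_v$; by the observation above, a coordinate is Pareto-minimal iff its $H$-value is smaller than the $H$-values of all coordinates before it in this order, i.e., iff it is a lower record of the $H$-sequence read in $a_v$-order. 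Because $(H(v))_v$ is i.i.d.\ and independent of the $a_v$-ordering, R\'enyi's record theorem applies: the record indicators are \emph{independent}, the $k$-th having probability $1/k$, so the number of records $R$ satisfies $\E[R]=\sum_{k=1}^{n'}1/k\le\ln n+1$, and a Chernoff bound for sums of independent indicators gives $\pr[R\ge A\ln n]\le n^{-C}$ for any desired constant $C$, with $A=A(C)=O(1)$. At an intermediate time $t$, \ParetoSampler{} holds the Pareto frontier of the first $t$ updates; collapsing updates to the same $v$ as before, this is the record set of the coordinates seen so far, ordered by their prefix-minima $a_v^{(t)}=\min_{i\le t:v_i=v}Y_i/\Delta_i$, which are still functions of the $Y_i$'s alone, still a.s.\ distinct, and still independent of $(H(v))_v$, so the same bound holds. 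Union-bounding over the $\le\poly(n)$ time steps (taking $C$ larger than the polynomial's degree) yields maximum frontier size $O(\log n)$ with probability $1-1/\poly(n)$; each frontier element is $O(1)$ words.

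\textbf{Main obstacle.} The crux is the space bound, and the key realization is that the frontier size is exactly a count of records in a random sequence. What makes this work is the independence of $H(v)$ from $a_v$ --- which holds precisely because $a_v$ is assembled only from the exponential ``noise'' $Y_i$ and the stream, never from the hash values --- which licenses R\'enyi's theorem and the clean $1/k$ record probabilities; the concentration estimate and the union bound over time are then routine. Reconciling the tie-breaking between the two samplers when $\ell_G$ is non-injective is a minor bookkeeping nuisance rather than a genuine difficulty.
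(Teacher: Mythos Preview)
Your proof is correct and follows essentially the same approach as the paper: correctness by reducing to \cref{thm:generic-G-sampler} via the 2D-monotonicity of $\ell_G$, and the space bound via a records-in-a-random-permutation argument exploiting the independence of $(H(v))_v$ from the $a_v$'s, followed by a Chernoff bound and a union bound over the $\poly(n)$ time steps. The only cosmetic difference is that you sort by $a_v$ and count $H$-records, whereas the paper sorts by $H(v)$ and counts prefix-minima of the $a_v$'s---these are the same Pareto frontier viewed from the two axes---and your discussion of tie-breaking is an extra care the paper omits (ties having probability zero).
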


\alglanguage{pseudocode}
\begin{algorithm}[H]
\caption{$\ParetoSampler$}\label{alg:Pareto-sampler}
{\bfseries Specifications:} 
The state is a set $S \subset \R_+ \times [0,1] \times [n]$, initially empty.  
The function $\Pareto(L)$ returns the (minimum) Pareto frontier of the tuples $L$ w.r.t.~their first two coordinates.\vspace{.2cm}
\begin{algorithmic}[1]
\Procedure{\Update}{$v,\Delta$} \Comment{$\mathbf{x}(v) \gets \mathbf{x}(v)+\Delta$}
\State Generate fresh $Y\sim \Exp(1)$.
\State $S \gets \Pareto(S \cup \{(Y/\Delta,H(v),v)\})$
\EndProcedure

\Procedure{\istrut{8}\Sample}{$G$}  \Comment{$G\in \mathcal{G}$}
\State Let $\ell_G : \R_+ \times [0,1] \to \R_+$ be the \emph{level function} of $G$
\State $(a_*,b_*,v_*) \gets \argmin_{(a,b,v)\in S} \{\ell_G(a,b)\}$
\State {\bfseries Return}$(v_*)$
\Comment{$v_*=v$ sampled with probability $G(\mathbf{x}(v))/G(\mathbf{x})$.}
\EndProcedure
\end{algorithmic}
\end{algorithm}
In \cref{sec:wor}, we show that both $G$-\Sampler{} and \ParetoSampler{} can be modified to sample \emph{without replacement} as well. 
One minor drawback of the generic $G$-\Sampler{} 
is that we have to compute the level function for $G$.
For specific functions $G$ of interest, 
we would like to have explicit, hardwired
expressions for the level function. 
An example of this is a new, simple 
$F_{1/2}$-\Sampler{} presented in \cref{alg:Fhalf-sampler}.
\emph{Why} Line 3 effects sampling according to the 
weight function $G(z)=z^{1/2}$ is explained 
in \cref{sec:deriving-level-functions}.
(Here $\erf^{-1}$ is the inverse Gauss error 
function, which is available as 
\texttt{scipy.special.erfinv} in Python.)

\alglanguage{pseudocode}
\begin{algorithm}[H]
\caption{$F_{1/2}$-{\Sampler}}\label{alg:Fhalf-sampler}
{\bfseries Specifications:} The state is $(v_*,h_*)\in [n]\cup\{\perp\}\times \R_+\cup\{\infty\}$, initially $(\perp,\infty)$. After processing a stream of updates,
        $\pr(v_*=v) = \sqrt{\mathbf{x}(v)}/\sum_{u\in [n]} \sqrt{\mathbf{x}(u)}$.\vspace{.2cm}
\begin{algorithmic}[1]
\Procedure{\Update}{$v,\Delta$} \Comment{$\mathbf{x}(v) \gets \mathbf{x}(v)+\Delta$}
\State Generate fresh $Y\sim \Exp(1)$.
\State $h \gets \sqrt{2Y/\Delta} \cdot \erf^{-1}(H(v))$ \Comment{$\erf:$ Gauss error function}
\If{$h<h_*$}
    \State $(v_*,h_*) \gets (v,h)$
\EndIf\EndProcedure
\end{algorithmic}
\end{algorithm}

\cref{lem:level} is the key that unlocks all of our results.  
It shows that for any $\lambda>0$, \Levy-induced level functions 
can be used to generate variables distributed according to $\Exp(G(\lambda))$, which are directly useful for truly perfect 
$G$-sampling and even $G$-moment estimation.
\begin{lemma}[level functions]\label{lem:level}
For any function $G\in \mathcal{G}$, there exists a 
(deterministic) 
function $\ell_G : (0,\infty) \times (0,1) \to \R_+$ satisfying:
\begin{description}
    \item[2D-monotonicity.] for any $a,a'\in \R_+$ and $b,b'\in[0,1]$, $a\leq a'$ and $b\leq b'$ implies $\ell_G(a,b)\leq \ell_G(a',b')$;
    \item[$G$-transformation.] if $Y\sim \Exp(\lambda)$ and 
    $U\sim \mathrm{Uniform}(0,1)$, 
    then $\ell_G(Y,U)\sim \Exp(G(\lambda))$.
\end{description}
\end{lemma}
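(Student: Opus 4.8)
The plan is to realize $\ell_G$ using the subordinator attached to $G$ through the \Levy–Khintchine correspondence (\cref{sect:Levy}). Let $(X_t)_{t\ge 0}$ be the non-negative \Levy{} process whose Laplace exponent is $G$, so that $\E[e^{-\lambda X_t}]=e^{-tG(\lambda)}$ for all $\lambda,t\ge 0$, where the killing term $c$ is folded in via the convention $e^{-\lambda\cdot\infty}=0$ (killed paths have $X_t=\infty$). Define $\ell_G(a,b)\bydef\inf\{t\ge 0 \;\mid\; \pr(X_t\ge a)\ge b\}$. When $G\equiv 0$ the statement is vacuous, so assume $G\not\equiv 0$; then $G(\lambda)>0$ for every $\lambda>0$ and $X_t\to\infty$ a.s., which I will use to check $\ell_G$ is finite on $(0,\infty)\times(0,1)$. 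Write $F(t,a)\bydef\pr(X_t\ge a)$.

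\textbf{2D-monotonicity.} A subordinator has non-decreasing sample paths, so $s\le t$ implies $X_s\le X_t$ a.s., hence $t\mapsto F(t,a)$ is non-decreasing; it is plainly non-increasing in $a$. Therefore for fixed $a$ the set $\{t:F(t,a)\ge b\}$ is an up-set ($[\ell,\infty)$ or $(\ell,\infty)$) and $\ell_G(a,b)$ is its left endpoint. Increasing $a$ lowers $F(\cdot,a)$ pointwise and increasing $b$ raises the threshold; either move shrinks the up-set and weakly raises its left endpoint, giving $\ell_G(a,b)\le\ell_G(a',b)\le\ell_G(a',b')$ for $a\le a'$, $b\le b'$. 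For finiteness, fix $a>0$ and $b\in(0,1)$: $F(0,a)=0$, while $F(t,a)\to 1$ as $t\to\infty$ since $X_t\to\infty$ a.s., so the up-set is nonempty and excludes $t=0$, i.e.\ $\ell_G(a,b)\in(0,\infty)$.

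\textbf{$G$-transformation.} Let $Y\sim\Exp(\lambda)$ and $U\sim\Uniform(0,1)$ be independent. The heart of the argument is the two-sided containment, valid for every $s\ge 0$:
\[
\{\ell_G(Y,U)>s\}\ \subseteq\ \{F(s,Y)<U\}\ \subseteq\ \{\ell_G(Y,U)\ge s\}.
\]
The left inclusion is the contrapositive of ``$F(s,a)\ge b\Rightarrow s\in\{t:F(t,a)\ge b\}\Rightarrow\ell_G(a,b)\le s$''. The right inclusion follows because $F(s,a)<b$ forces $F(t,a)<b$ for all $t\le s$ by monotonicity in $t$, hence $\ell_G(a,b)\ge s$. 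Conditioning on $Y$ and using $U\sim\Uniform(0,1)$ together with $F(s,a)=1-\pr(X_s<a)$,
\[
\pr\!\left(F(s,Y)<U\right)=\E_Y\!\left[1-F(s,Y)\right]=\pr(X_s<Y)=\E\!\left[e^{-\lambda X_s}\right]=e^{-sG(\lambda)},
\]
where the third equality integrates $\pr(Y>x)=e^{-\lambda x}$ against the law of $X_s$ (killed paths $X_s=\infty$ contribute $0$), and the last is the Laplace-exponent identity. Plugging this into the containment gives $\pr(\ell_G(Y,U)>s)\le e^{-sG(\lambda)}\le\pr(\ell_G(Y,U)\ge s)$. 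Finally, since $\{\ell_G(Y,U)\ge s+\epsilon\}\subseteq\{\ell_G(Y,U)>s\}$, applying the lower bound at $s+\epsilon$ yields $\pr(\ell_G(Y,U)>s)\ge e^{-(s+\epsilon)G(\lambda)}$; letting $\epsilon\downarrow 0$ forces $\pr(\ell_G(Y,U)>s)=e^{-sG(\lambda)}$ for all $s\ge 0$, i.e.\ $\ell_G(Y,U)\sim\Exp(G(\lambda))$.

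The only delicate points are bookkeeping ones. First, the infimum defining $\ell_G$ need not be attained and $X_s$ may have atoms, so $\{\ell_G(Y,U)>s\}$ and $\{F(s,Y)<U\}$ cannot be identified exactly — this is why I work with the sandwich between $\{>s\}$ and $\{\ge s\}$ events and close the gap by an $\epsilon\downarrow 0$ limit. Second, the killing rate $c$ must be correctly absorbed into the identity $\E[e^{-\lambda X_s}]=e^{-sG(\lambda)}$ under the convention $e^{-\lambda\cdot\infty}=0$; I expect confirming that the triple $(c,\gamma_0,\nu)$ is genuinely the \Levy{} triple of $(X_t)$, killing included, is the step that most needs care, but this is exactly the \Levy–Khintchine representation recalled in \cref{sect:Levy}.
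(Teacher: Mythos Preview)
Your proof is correct and follows the same route as the paper: define $\ell_G$ via the subordinator associated to $G$, deduce 2D-monotonicity from path monotonicity, and compute the survival function of $\ell_G(Y,U)$ via the Laplace-exponent identity $\E[e^{-\lambda X_s}]=e^{-sG(\lambda)}$. Your sandwich between $\{\ell_G(Y,U)>s\}$ and $\{\ell_G(Y,U)\ge s\}$ is in fact more careful than the paper's argument, which invokes continuity of $t\mapsto\pr(X_t\ge Y)$ to identify these events directly.
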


\cref{lem:level} is a powerful tool. 
It can be used to build samplers
for objects more complex than vectors.  
To illustrate one example situation, suppose $H=([n],E)$ is a fixed graph 
(say a large grid) whose vector of vertex weights 
$\mathbf{x} \in \R_+^n$ are subject to a stream of incremental updates.
We would like to sample an edge $(u,v) \in E(H)$ 
proportional to its $G$-weight $G(\mathbf{x}(u),\mathbf{x}(v))$. 
We build $G$ using a \emph{stochastic sampling circuit}, 
whose constituent parts correspond to addition, scalar multiplication, 
and evaluating $\mathcal{G}$-functions.  
For example, in \cref{sec:sample_circuit} we show how to sample an edge
according to the edge weight
$G(a,b) = \log(1+\sqrt{a}+\sqrt{b}) + 2(1-e^{-a-b})$, 
which is constructed from $\mathcal{G}$-functions 
$\sqrt{x},1-e^{-x},$ and $\log(1+x)$.  
This approach can trivially be extended to 
$G$-sampling edges from hypergraphs, 
and to \emph{heterogenous} sampling, 
where we want each edge $(u,v)$ to be sampled proportional
to $G_{(u,v)}(\mathbf{x}(u),\mathbf{x}(v))$, where the $\mathcal{G}$-functions $\{G_{(u,v)}\}$ could all be different.






%




\subsection{Organization}

In \cref{sect:Levy} we review non-negative \Levy{} processes 
and the specialization of the \Levy-Khintchine representation 
theorem to non-negative processes.
In \cref{sec:proofs-main-lemma-theorems} we prove \cref{lem:level} 
and \cref{thm:generic-G-sampler,thm:ParetoSampler} on 
the correctness of the generic $G$-\Sampler{} and \ParetoSampler.
In \cref{sec:deriving-level-functions} 
we give explicit formulae for
the level functions of a variety of $\mathcal{G}$-functions,
including $G(z)=z^{1/2}$ (the $F_{1/2}$-\Sampler), 
the soft-cap sampler, 
$G(z)=1-e^{\tau z}$, 
and the log sampler, 
$G(z)=\log(1+z)$.
\cref{sec:sample_circuit} introduces 
stochastic sampling circuits, 
one application of which is 
$G$-edge-sampling from (hyper)graphs.
\cref{sec:conclusion} concludes with 
some remarks and open problems.
See \cref{sec:wor} for adaptations of our algorithms to sampling 
$k$ indices \emph{without} 
replacemenet.

\section{\Levy{} Processes and \Levy-Khintchine Representation}\label{sect:Levy}

\Levy{} processes are stochastic 
processes with independent, stationary increments.
In this paper we consider only one-dimensional, 
\emph{non-negative} \Levy{} processes.
This class excludes some natural processes
such as Wiener processes (Brownian motion).

\begin{definition}[non-negative L\'evy processes \cite{ken1999levy}]
    A random process $X=(X_t)_{t\geq 0}$ is a 
    non-negative \Levy{} process if it satisfies:
    \begin{description}
        \item[Non-negativity.] $X_t\in \R_+\cup\{\infty\}$ for all $t\in \R_+$ .\footnote{We \underline{do} allow the 
        random process 
        to take on the value $\infty$, which turns out to be meaningful and often useful for designing algorithms.}
        \item[Stationary Increments.] $X_{t+s}-X_{t}\sim X_s$ for all $ t,s\in \R_+$.\label{item:stationary}
        \item[Independent Increments.] For any $0\leq t_1 < t_2\ldots <t_k$, $X_{t_1},X_{t_2}-X_{t_1},\ldots, X_{t_k}-X_{t_{k-1}}$ are mutually independent.
        \label{item:independent}
        \item[Stochastic Continuity.] $X_0=0$ almost surely and $\lim_{t\searrow 0}\pr(X_t>\epsilon)=0$ for any $\epsilon>0$.
    \end{description}
\end{definition}

The bijection between $\mathcal{G}$ and 
non-negative \Levy{} processes is a 
consequence of the general \Levy-Khintchine 
representation theorem~\cite{ken1999levy}.
\begin{theorem}[\Levy-Khintchine representation for non-negative L\'evy processes. See Sato {\cite[Ch.~10]{ken1999levy}}]\label{thm:lk}
Any non-negative \Levy{} process $X=(X_t)_{t\geq 0}$ 
can be identified by a triple $(c,\gamma_0,\nu)$ where $c,\gamma_0\in\R_+$ and $\nu$ is a 
measure on $(0,\infty)$ such that
\begin{align}
    \int_{(0,\infty)}\min(r,1)\,\nu(d{r})<\infty.\label{eq:measure_cond}
\end{align}
The identification is through the Laplace transform. For any $t,z\in \R_+$
\begin{align}
    \E e^{-z X_t} &= \exp\left(-t \left(c\ind{z>0} + \gamma_0 z +\int_{(0,\infty)}(1-e^{-rz})\,\nu(dr)\right)\right). \label{eq:laplace}
\end{align}
Conversely, any triple $(c,\gamma_0,\nu)$ with $c,\gamma_0\in \R_+$ and $\nu$ satisfying (\ref{eq:measure_cond}) 
corresponds to a non-negative \Levy{} 
process $(X_t)$ satisfying (\ref{eq:laplace}).
\end{theorem}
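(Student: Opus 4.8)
The plan is to follow the classical route to the \Levy-Khintchine representation, specialized to the non-negative (monotone) case, where it simplifies a lot: there is no Gaussian component, and because all jumps are positive there is no need to compensate small jumps. \textbf{First I would isolate the Laplace exponent.} Fix $z\ge 0$ and set $\phi_z(t)=\E e^{-zX_t}\in[0,1]$, with the convention $e^{-z\cdot\infty}=0$ for $z>0$. Independent and stationary increments give $\phi_z(t+s)=\E\big[e^{-zX_t}e^{-z(X_{t+s}-X_t)}\big]=\phi_z(t)\phi_z(s)$, and stochastic continuity plus dominated convergence makes $t\mapsto\phi_z(t)$ right-continuous; hence $\phi_z(t)=e^{-t\Psi(z)}$ for a unique $\Psi(z)\in[0,\infty]$. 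I will assume the process is not instantaneously killed, i.e.\ $\Psi(z)<\infty$ for all $z$ (equivalently, the killing rate $c$ below is finite, as the theorem requires), and the goal becomes to show $\Psi(z)=c\ind{z>0}+\gamma_0 z+\int_{(0,\infty)}(1-e^{-rz})\,\nu(dr)$. Once this is known, the triple determines the law of each $X_t$ (by non-negativity) and hence of the whole process (by independent stationary increments).

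\textbf{Next I would use non-negativity to get monotone paths and the \Levy-It\^o decomposition.} Since $X_{t+s}-X_t\sim X_s\ge 0$, every increment is a.s.\ non-negative; applying this over a countable dense set of times and using the standard c\`adl\`ag modification of a \Levy{} process, I may assume $t\mapsto X_t$ is non-decreasing and c\`adl\`ag, so $X$ is a killed subordinator. Writing $\zeta=\inf\{t:X_t=\infty\}$, independent increments give $\pr(\zeta>t+s)=\pr(\zeta>t)\pr(\zeta>s)$, so $\zeta\sim\Exp(c)$ for some $c\in[0,\infty)$, and I work on $\{t<\zeta\}$. The main work is to show that the jump measure $N=\sum_{s:\,\Delta X_s>0}\delta_{(s,\Delta X_s)}$ on $(0,\infty)^2$ is a Poisson random measure with intensity $\mathrm{Leb}\otimes\nu$: by independent increments the jump counts over disjoint time intervals are independent, by stationary increments they depend only on interval lengths, so for $A$ bounded away from $0$ the process $t\mapsto N((0,t]\times A)$ is a non-decreasing integer-valued \Levy{} process, hence Poisson of some rate $\nu(A)<\infty$; a monotone-class argument upgrades this to the full Poisson random measure claim. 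Then $\sum_{s\le t}\Delta X_s\le X_t<\infty$ a.s.\ forces, via the exponential formula, $\int_{(0,\infty)}\min(r,1)\,\nu(dr)<\infty$, i.e.\ \eqref{eq:measure_cond}; conversely that same condition makes $J_t:=\int_0^t\!\!\int_{(0,\infty)}r\,N(ds,dr)$ finite a.s.\ (finitely many jumps exceed $1$; the rest have expected sum $t\int_{(0,1]}r\,\nu(dr)$). Finally $X_t-J_t$ is continuous, non-decreasing, with stationary independent increments, hence $=\gamma_0 t$ for some $\gamma_0\ge 0$.

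\textbf{Then the Laplace transform, the converse, and uniqueness follow routinely.} On $\{t<\zeta\}$ we have $X_t=\gamma_0 t+J_t$, so Campbell's exponential formula gives $\E e^{-zJ_t}=\exp\big(-t\int_{(0,\infty)}(1-e^{-rz})\,\nu(dr)\big)$; multiplying by $\pr(t<\zeta)=e^{-ct}$ and using $e^{-zX_t}=0$ on $\{t\ge\zeta\}$ for $z>0$ yields \eqref{eq:laplace}. For the converse, given $(c,\gamma_0,\nu)$ with \eqref{eq:measure_cond} I would take a Poisson random measure $N$ on $(0,\infty)^2$ with intensity $\mathrm{Leb}\otimes\nu$ and an independent clock $\zeta\sim\Exp(c)$, set $X_t=\gamma_0 t+\int_0^t\!\!\int r\,N(ds,dr)$ for $t<\zeta$ and $X_t=\infty$ otherwise; \eqref{eq:measure_cond} makes $X_t$ finite on $\{t<\zeta\}$, the \Levy{} axioms are inherited from $N$ and the memoryless clock (stochastic continuity because $\E[1\wedge X_t]\le \gamma_0 t + t\int_{(0,\infty)}\min(r,1)\,\nu(dr)+(1-e^{-ct})\to 0$), and the computation above shows \eqref{eq:laplace} holds. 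Uniqueness of the triple is immediate from $c=\lim_{z\to 0^+}\Psi(z)$, $\gamma_0=\lim_{z\to\infty}\Psi(z)/z$, and then injectivity of the Laplace transform of $\nu$ applied to $z\mapsto\Psi(z)-c-\gamma_0 z=\int(1-e^{-rz})\,\nu(dr)$.

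\textbf{The hard part} will be the \Levy-It\^o decomposition in the second step — rigorously showing that the jumps form a Poisson random measure with a \emph{product} intensity and that the remainder is linear. Non-negativity is exactly what makes this go through cleanly: it forces monotone paths, eliminates any Gaussian part, and renders the small-jump sum absolutely convergent, so no compensation (hence no principal-value term in \eqref{eq:laplace}) is needed. The remaining delicacies — the c\`adl\`ag modification, the memoryless killing time, and the exponential/Campbell formula for Poisson random measures — are all standard.
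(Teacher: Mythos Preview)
The paper does not give its own proof of this theorem; it is quoted as a classical result with a citation to Sato~\cite[Ch.~10]{ken1999levy}, so there is nothing to compare your argument against.

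That said, your sketch is the standard textbook route to the \Levy-Khintchine formula for subordinators and is essentially correct. Two small remarks. First, you do not need to \emph{assume} $\Psi(z)<\infty$: if $c=\infty$ then $X_t=\infty$ a.s.\ for every $t>0$, which violates stochastic continuity, so finiteness of the killing rate is forced by the axioms. Second, the step ``$X_t-J_t$ is continuous, non-decreasing, with stationary independent increments, hence $=\gamma_0 t$'' is correct but deserves a word of justification: a continuous \Levy{} process has Gaussian increments, and a non-decreasing Gaussian must be degenerate, leaving only deterministic drift. Everything else --- the multiplicativity argument for the Laplace exponent, the exponential killing clock, the Poisson structure of the jump measure, Campbell's formula, and the explicit converse construction --- is exactly how the result is established in the references the paper cites.
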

The parameters $(c,\gamma_0,\nu)$ of a \Levy{} process $(X_t)_{t\ge 0}$ are called the 
\emph{killing rate}, the \emph{drift}, 
and the \emph{L\'evy measure}.
We associate with each $G\in\mathcal{G}$ 
a \emph{\Levy-induced level function}.
\begin{definition}[\Levy{} induced level function]\label{def:induced}
Let $G\in \mathcal{G}$ and $X=(X_t)_{t\geq 0}$ be the corresponding non-negative \Levy{} process, 
i.e., for any $t,z\in \R_+$,
\[
\E e^{-zX_t}=e^{-t G(z)}.
\]
The \emph{induced level function} 
$\ell_G:(0,\infty)\times (0,1)\to \R_+$ is,
for $a\in (0,\infty)$ and $b\in(0,1)$, defined to be
\[
    \ell_G(a,b) = \inf\{t \;\mid\; \pr(X_t \geq a)\geq  b\}.
\]
\end{definition}

\section{Proofs of \cref{lem:level} and \cref{thm:generic-G-sampler,thm:ParetoSampler}}\label{sec:proofs-main-lemma-theorems}

In this section we prove the key lemma stated in the introduction, \cref{lem:level}, as well as 
\cref{thm:generic-G-sampler,thm:ParetoSampler} 
concerning
the correctness of the generic $G$-\Sampler{}
and the $\mathcal{G}$-universal \ParetoSampler.

\begin{proof}[Proof of Lemma \ref{lem:level}]
Recall that 
$G\in\mathcal{G}$, 
$Y\sim \Exp(\lambda)$, and
$U\sim \mathrm{Uniform}(0,1)$.
We argue that $\ell_G(Y,U)$ (\cref{def:induced}) 
is monotonic in both arguments
and analyze its distribution.

By \Levy-Khintchine, $G$ has a corresponding non-negative 
\Levy{} process 
$X=(X_t)_{t\geq 0}$ for which $\E e^{-z X_t}=e^{-tG(z)}$. 
Note that since \Levy{} processes are memoryless, 
a non-negative \Levy{} process is also \emph{non-decreasing}.
Therefore $\pr(X_t\geq a)$ 
is increasing in $t$ and decreasing in $a$,
and as a consequence, 
$\ell_G(a,b)=\inf\{t \;\mid\; \pr(X_t\geq a)\geq b\}$ is 2D-monotonic. 
We now analyze the distribution of 
$\ell_G(Y,U)$.   For any $w>0$, we have 
    \begin{align*}
        \pr(\ell_G(Y,U)\geq w) &= \pr(\inf\{t:\pr(X_t\geq Y)\geq U\}\geq w) & \mbox{definition of $\ell_G$}
        \intertext{Note that by the definition of L\'evy process, $\pr(X_t\geq Y)$ is a continuous function in $t$ and therefore $\inf\{t:\pr(X_t\geq Y)\geq U\}\geq w$ is equivalent to $\pr(X_w\geq Y)\leq U$. Thus, this is equal to}
        &=\pr(\pr(X_w\geq Y)\leq U)\\
        &= 1-\pr(X_w\geq Y)  & U\sim\mathrm{Uniform}(0,1)\\
        &=\pr(X_w< Y)\\
        &= \E (\pr(X_w< Y) \mid X_w)\\
        &= \E (e^{-\lambda X_w} \mid X_w) & Y\sim \Exp(\lambda)\\
        &= \E e^{-\lambda X_w}\\
        &= e^{-w G(\lambda)} & \mbox{by \Levy-Khintchine (\cref{thm:lk})}.
    \end{align*}
Since the CDF of $\Exp(\lambda)$ is $1-e^{-\lambda x}$,
we conclude that $\ell_G(Y,U)\sim \Exp(G(\lambda))$.
\end{proof}

We can now prove the correctness of the generic $G$-\Sampler{} (\cref{thm:generic-G-sampler}, \cref{alg:G-sampler})
and the \ParetoSampler{} (\cref{thm:ParetoSampler},\cref{alg:Pareto-sampler}).

\begin{proof}[Proof of \cref{thm:generic-G-sampler} ($G$-\Sampler)]
Let $\mathbf{x} \in \R_+^n$ 
be the final vector after all updates,
and $(v_*,h_*)$ the final memory state of $G$-\Sampler.
We will prove that 
\begin{itemize}
    \item $h_*\sim \Exp(G(\mathbf{x}))$, and
    \item for any $v\in[n]$, $\pr(v_*=v)=G(\mathbf{x}(v))/G(\mathbf{x})$.
\end{itemize}
For $v\in [n]$, let $h_v$ be the smallest 
value produced by 
the $k$ updates $(v,\Delta_i)_{i\in [k]}$ to index $v$, 
so $\mathbf{x}(v) = \sum_{i=1}^{k} \Delta_i$.
Let $Y_1,\ldots,Y_{k}$ be the 
i.i.d.~$\Exp(1)$ random variables 
generated during those updates.
Then
\begin{align*}
    h_v &\sim \min \big\{\ell_G(Y_1/\Delta_1,H(v)),\ldots \ell_G(Y_{k}/\Delta_k,H(v))\big\}
    \intertext{and by the 2D-monotonicity property of \cref{lem:level}, this is equal to}
    &= \ell_G\big(\min\big\{Y_1/\Delta_1,\ldots,Y_{k}/\Delta_k\big\},H(v)\big). 
    \intertext{Note that $\min\{Y_1/\Delta_1,\ldots,Y_{k}/\Delta_k\}\sim \Exp(\sum_i \Delta_i) = \Exp(\mathbf{x}(v))$ 
    and $H(v)\sim \mathrm{Uniform}(0,1)$.
    By the $G$-transformation property of \cref{lem:level}, this is distributed as}
    &\sim \Exp(G(\mathbf{x}(v))).
\end{align*}
By properties of the exponential distribution, we have 
$h_* = \min_{v\in[n]} h_v \sim \min_{v\in [n]} \Exp(G(\mathbf{x}(v))) \sim \Exp(G(\mathbf{x}))$ and the probability that $v_*=v$ is sampled is exactly $G(\mathbf{x}(v))/G(\mathbf{x})$.
\end{proof}

\begin{remark}\label{rem:precision}
The proof of \cref{thm:generic-G-sampler} shows that $G$-\Sampler{} 
is truly perfect with zero probability of failure, according to \cref{def:sampler}, assuming the the value $h_*$ can be stored in a word. On a discrete computer, where the value $h_*$ can only be stored  discretely, there is no hope to have $h_*$ to distribute as $\Exp(G(\mathbf{x}))$ perfectly. Nevertheless, a truly perfect sample can still be returned on discrete computers, because one does not 
have to compute the exact $\ell_G(x,y)$ values 
but just correctly \emph{compare them}. 
We now discuss such a scheme. 
Let $\norm{\mathbf{x}}_\infty =\poly(n)$.
Supposing that we only stored $h_*$ to $O(\log n)$ bits of precision, 
we may not be able to correctly ascertain whether $h<h_*$
in Line 4 of $G$-\Sampler{} (\cref{alg:G-sampler}).  
This event occurs with probability\footnote{To see this, consider two independent random variables $Y_1\sim \Exp(\lambda_1)$ and $Y_2\sim \Exp(\lambda_2)$. By the properties of exponential random variables, conditioning on $Y_1<Y_2$, then $|Y_1-Y_2|\sim \Exp(\lambda_2)$; conditioning on $Y_1>Y_2$, then $|Y_1-Y_2|\sim \Exp(\lambda_1)$. This suggests $\pr(|Y_1-Y_2|\geq z)\geq e^{-\max(\lambda_1,\lambda_2) z}$. Thus if both $\lambda_1$ and $\lambda_2$ are $O(\poly(n))$, then with $|Y_1-Y_2|=\Omega(1/\poly(n))$ with probability $e^{-O(1/\poly (n))}\geq 1-O(1/\poly (n))$. Now let $\lambda_1 = \mathbf{x}({v_*'})$ and $\lambda_1 = \mathbf{x}({v_*})$, where $v_*'$ is the sample selected by the $O(\log n)$-bit sketch and $v_*$ is the sample selected by the infinite precision sketch (the first $O(\log n)$ bits are the same with the former one). If $v_*'\neq v_*$ then it implies $|Y_2-Y_1|<O(1/\poly(n))$, which happens only with probabilty $O(1/\poly(n))$.} $1/\poly(n)$, which induces an additive error
in the sampling probability, i.e.,
$(\epsilon,\eta,\delta)=(0,1/\poly(n),0)$.
We cannot regard this event 
as a \emph{failure} 
(with $(\epsilon,\eta,\delta)=(0,0,1/\poly(n))$)
because the sampling distribution, \emph{conditioned 
on non-failure}, would in general not be the same
as the truly perfect distribution.
There are ways to implement a truly perfect sampler
($(\epsilon,\eta,\delta)=(0,0,0)$) which affect the space bound.  Suppose update $(v,\Delta)$ is issued at (integer) 
time $k$.  We could store $(v,k)$ and generate $Y$ from $k$
via the random oracle.  Thus, in a stream of $m$ updates,
the space would be $O(\log(n + m))$ bits.
Another option is to generate more precise estimates
of $Y$ (and $H(v)$) on the fly.
Rather than store a tuple $(v,h)$, $h=\ell_G(Y/\Delta,H(v))$, 
we store $(v,R,\Delta)$, where $R\sim \mathrm{Uniform}(0,1)$
is dynamically generated to the precision necessary to execute
Line 4 of $G$-\Sampler{} (\cref{alg:G-sampler}).
Specifically, $Y = -\ln R \sim \Exp(1)$ is derived from $R$,
and if we cannot determine if $h<h_*$, 
where $h=\ell_G(Y/\Delta,H(v)), h_*=\ell_G(Y_*/\Delta_*,H(v_*))$, 
we append additional random bits to $R,R_*$ until 
the outcome of the comparison is certain.  
\end{remark}

\begin{proof}[Proof of \cref{thm:ParetoSampler} (\ParetoSampler)]
Let $T$ be the set of \emph{all}
tuples $(Y_i/\Delta_i,H(v_i),v_i)$ 
generated during updates, and $S$ be the minimum 
Pareto frontier of $T$ w.r.t.~the first two coordinates in each tuple.
Fix any query function $G\in\mathcal{G}$.
Imagine that we ran $G$-\Sampler{} (\cref{alg:G-sampler}) on the same update sequence
with the same randomness.  
By the 2D-monotonicity property of \cref{lem:level},
the output of $G$-\Sampler, 
$(v_i, h_* = \ell_G(Y_i/\Delta_i, H(v_i)))$
must correspond to a tuple $(Y_i/\Delta_i,H(v_i),v_i) \in S$
on the minimum Pareto frontier.
Thus, the {\sc Sample}$(G)$ function 
of \ParetoSampler{} would return
the same index $v_i$ as $G$-\Sampler.

We now analyze the space bound of \ParetoSampler.
Suppose that $h_v = \min\{Y_i/\Delta_i\}$, 
where the minimum is over all updates $(v,\Delta_i)$ 
to $\mathbf{x}(v)$.  (When $\mathbf{x}(v)=0$, $h_v=\infty$.)
We shall condition on \emph{arbitrary}
values $\{h_v\}$ and only consider the randomness introduced
by the hash function $H$, which effects a random permutation
on $\{h_v\}$.  Let $L=(h_{v_1},\ldots,h_{v_n})$ 
be the permutation of the $h$-values that is sorted 
in increasing order by $H(v_i)$.  
Then $|S|$ is exactly the number of distinct prefix-minima of $L$.  Define $X_i = \ind{h_{v_i} = \min\{h_{v_1},\ldots,h_{v_i}\}}$ to be the indicator that $h_{v_i}$ is a prefix-minimum and $v_i$ is included in a tuple of 
$S$.   Then
\[
\E(|S|) = \E\left(\sum_{i\in [n]} X_i\right) = \sum_{i\in [n]} 1/i 
= H_n < \ln n+1,
\]
where $H_n$ is the $n$th harmonic number. Note that $X_i$s are independent.
By Chernoff bounds, for any $c\geq 2$, 
$\pr(|S| > cH_n) < n^{-\Omega(c)}$.
Since there are only $\poly(n)$ updates, 
by a union bound,
$|S| = O(\log n)$ at all times, with probability
$1-1/\poly(n)$.
\end{proof}

\begin{remark}
The $O(\log n)$-word space bound holds 
even under some exponentially long update sequences.
Suppose all updates have magnitude at least 1, 
i.e., $\Delta_i\geq 1$.  
Then the same argument shows that
the expected number of times $h_v$ changes is 
at most $\ln\mathbf{x}(v)+1$ and by Chernoff bounds, 
the total number of times any of 
$\{h_v\}$ changes is 
$M = O(n\ln\|\mathbf{x}\|_\infty)$
with probability $1-\exp(-\Omega(M))$.
Thus, we can invoke a union bound
over $M$ states of the data structure and conclude $|S|=O(\log M)$ with probability $1-1/\poly(M)$.
This is $O(\log n)$ 
when $\|\mathbf{x}\|_\infty < \exp(\poly(n))$.
\end{remark}

\section{Deriving the Level Functions}\label{sec:deriving-level-functions}

The generic $G$-\Sampler{} and $\mathcal{G}$-universal \ParetoSampler{} refer to the \Levy-induced level function $\ell_G$.  
In this section we illustrate how to derive
expressions for $\ell_G$ in a variety of cases.

We begin by showing how $G$-\Sampler{} (\cref{alg:G-sampler})
``reconstructs'' the known $F_0$- and 
$F_1$-samplers, 
then consider a sample of non-trivial weight functions, 
$G(z) = z^{1/2}$ (used in the $F_{1/2}$-\Sampler, \cref{alg:Fhalf-sampler}), 
$G(z) = 1-e^{-\tau z}$ (corresponding to a Poisson process),
and $G(z) = \log(1+z)$ (corresponding to a Gamma process).

\begin{example}[$F_0$-sampler $\mapsto$  
$\textsf{Min}$ sketch~\cite{Cohen97}]\label{exa:f0_sampler}
The weight function for $F_0$-sampling is $G(z)=\ind{z>0}$. 
By L\'evy-Khintchine (\cref{thm:lk}), this function corresponds to a ``pure-killed process'' $(X_t)_{t\ge 0}$ 
which can be simulated as follows.
\begin{itemize}
    \item Sample a \emph{kill time} $Y\sim\Exp(1)$.
    \item Set $\displaystyle X_t = \left\{\begin{array}{ll}
    0 & \mbox{ if $t<Y$,}\\ 
    \infty & \mbox{ if $t\geq Y$.}\end{array}\right.$
\end{itemize}
The induced level function (\cref{def:induced}) is, for $a\in(0,\infty)$ and $b\in (0,1)$,
\begin{align*}
    \ell_G(a,b)&=\inf\{z \;\mid\; \pr(X_z\geq a)\geq b\}
    \intertext{By definition, $X_z\geq a>0$ if and only if $X$ has been killed by time $z$, i.e., $z\geq Y$.  Continuing,}
    &=\inf\{z \;\mid\; \pr(z\geq Y)\geq b\}\\
    &=\inf\{z \;\mid\; 1-e^{-z}\geq b\}\\
    &=-\log (1-b).
\end{align*}
Thus, by inserting this $\ell_G$ into $G$-\Sampler{} (\cref{alg:G-sampler}), it stores $(v_*,h_*)$ where $v_*$ has the 
smallest hash value and $h_*\sim \Exp(F_0)$,\footnote{Recall that for a frequency vector $\mathbf{x}\in\R_+^n$, $F_0=\sum_{v\in [n]}\ind{\mathbf{x}(v)>0}$ is the number of distinct elements present in the stream.} 
thereby essentially reproducing Cohen's~\cite{Cohen97} $\textsf{Min}$ sketch. 

\end{example}\begin{example}[$F_1$-sampler $\mapsto$ min-based reservoir sampling~\cite{vitter1985random}]\label{exa:f1_sampler}
The weight function for an $F_1$-sampler is $G(z)=z$. 
By \Levy-Khintchine, this corresponds 
to a deterministic drift process $(X_t)_{t\ge 0}$, 
where $X_t = t$. The induced level function is,
\begin{align*}
    \ell_G(a,b)&=\inf\{z \;\mid\; \pr(X_z\geq a)\geq b\}
    \intertext{and since $X_z=z$, $\pr(X_z\geq x)=\ind{z\geq x}$,}
    &=\inf\{z \;\mid\; \ind{z\geq a}\geq b\}\\
    &=a.  & \mbox{Note that $b\in(0,1)$}
\end{align*}
Thus the corresponding $G$-sampler does not use the hash function $H$,
and recreates reservoir sampling~\cite{vitter1985random} 
with the choice of replacement implemented 
by taking the minimum random value.
\end{example}

\medskip 

Next, we demonstrate a non-trivial application: 
the construction of the $F_{1/2}$-\Sampler{}
presented in \cref{alg:Fhalf-sampler}.

\begin{example}[$F_{1/2}$-sampler]\label{example:1/2-stable}
For $F_{1/2}$, the weight function is $G(z)=\sqrt{z}$, which corresponds to the non-negative $1/2$-stable process $(X_t)_{t\ge 0}$. 
The induced level function is
\begin{align*}
    \ell_G(a,b) &= \inf\{z \;\mid\; \pr(X_z\geq a)\geq b\}
    \intertext{and since $X$ is $1/2$-stable, we have $X_z\sim z^{2} X_{1}$.  Continuing,}
    &= \inf\{z \;\mid\; \pr(z^{2} X_{1}\geq a)\geq b\}.
    \intertext{It is known that the 1/2-stable $X_1$ distributes identically with $1/Z^2$ where $Z$ is a standard Gaussian~\cite[page~29]{ken1999levy}. 
    Thus, we have $\pr(X_1\geq r)=\pr\left(|Z|\leq \sqrt{1/r}\right)=\mathrm{erf}\left(\sqrt{\frac{1}{2r}}\right)$, 
    where $\mathrm{erf}(s)=\frac{2}{\sqrt{\pi}}\int_0^se^{-s^2}\,dx$ is the \emph{Gauss error function}.  As $\pr(z^2X_1 \geq a) = \pr(X_1 \geq a/z^2) = \mathrm{erf}(z/\sqrt{2a})$, this is equal to}
    &= \sqrt{2 a} \cdot \mathrm{erf}^{-1}(b).
\end{align*}
Plugging the expression 
$\ell_G(a,b) = \sqrt{2a}\cdot\erf^{-1}(b)$ 
into $G$-\Sampler, we arrive at the 
$F_{1/2}$-\Sampler{} of \cref{alg:Fhalf-sampler},
and thereby establish its correctness.
\end{example}

The $1/2$-stable distribution has a clean form, which yields a closed-form expression for the corresponding level function.  
Refer to Penson and G\'orska \cite{penson2010exact} for 
explicit formulae for one-sided $k/l$-stable distributions, 
where $k,l\in \Z_+$ and $k\leq l$, 
which can be used to write $F_{k/l}$-samplers 
with explicit level functions.

\medskip 

Previously, approximate ``soft cap''-samplers were used by Cohen to estimate cap-statistics \cite{Cohen18}. The weight function for a  soft cap sampler is parameterized by $\tau>0$, where $G_\tau(z)=1-e^{-\tau z}$. We now compute the level functions needed to build soft cap samplers with precisely correct sampling probabilities.
\begin{example}[``soft cap''-sampler]\label{example-softcap}
The weight function is $G_\tau(z)=1-e^{-\tau z}$. By L\'evy-Khintchine, $G_\tau$ corresponds to a unit-rate Poisson counting process $(X_t)_{t\ge 0}$ with jump size $\tau$, 
where $X_t/\tau \sim \Poisson(t)$.
By \cref{def:induced},
\begin{align*}
    \ell_{G_\tau}(a,b) &= \inf\{z \;\mid\; \pr(X_z\geq a)\geq b\}
    \intertext{Since $X_z/\tau\sim \Poisson(z)$, $\pr(X_z\geq a)=e^{- z}\sum_{j=\ceil{a\tau}}^{\infty} \frac{z^j}{j!}$.  Continuing,}
    &= \inf\left\{z \;\middle|\; e^{- z}\sum_{j=\ceil{a\tau}}^{\infty} \frac{z^j}{j!}\geq b\right\}.
\end{align*}
Thus $\ell_{G_{\tau}}(a,b)=w_{\ceil{a\tau},b}$ where $w_{k,b}$ is the unique solution to the equation $e^{-w}\sum_{j=k}^{\infty} \frac{w^j}{j!}= b$. Note that for any $k\in \Z_+$, the function $g(w)=e^{-w}\sum_{j=k}^{\infty} \frac{w^j}{j!}=\pr(\Poisson(w)\geq k)$ is increasing from 0 to $\infty$ as $w$ increases from $0$ to $\infty$, by a simple coupling argument. 
Therefore, $w_{k,b}$, as the solution of $g(w)=b$, 
can be computed with a binary search.
\end{example}

\begin{example}[log-sampler]\label{example:log-sampler}
Consider the weight function $G(z)=\log(1+z)$. By L\'evy-Khintchine, the corresponding process is a Gamma process $(X_t)_{t\ge 0}$, where $X_t\sim \mathrm{Gamma}(t,1)$. The PDF 
of $\mathrm{Gamma}(t,1)$ is $f(r)=r^{t-1}e^{-r}/\Gamma(t)$. By \cref{def:induced},
\begin{align*}
    \ell_{G_\tau}(a,b) &= \inf\{z \;\mid\; \pr(X_z\geq a)\geq b\}
    \intertext{ Since $X_z\sim \mathrm{Gamma}(z,1)$, $\pr(X_z\geq a)=\Gamma(z)^{-1}\int_a^\infty r^{z-1}e^{-r} \,dr$.  Continuing,}
    &= \inf\left\{z \;\middle|\;\Gamma(z)^{-1}\int_a^\infty r^{z-1}e^{-r} \,dr\geq b\right\}.
\end{align*}
Thus $\ell_{G}(a,b)=w'_{a,b}$ which is the unique solution of the equation $\Gamma(w)^{-1}\int_a^\infty r^{w-1}e^{-r} \,dr= b$. 
Once again the left hand side is monotonic in $w$ 
and therefore $w'_{a,b}$ can be found with a binary search.
\end{example}

As discussed in \cref{rem:precision}, there is no need to compute the level functions exactly, which is impossible to do so in the practical finite-precision model anyway. We only need to evaluate level functions to tell which index $v$ has the smallest $\ell_G(x,y)$ value. Such samples are still truly perfect, even though the final value $h_*$ is not a perfect $\Exp(G(\mathbf{x}))$ random variable. For a generic $G$, in practice one may pre-compute the level function of $G$ on a geometrically spaced lattice and cache it as a read-only table. Such a table can be shared and read simultaneously by an unbounded number of $G$-samplers for different applications and therefore the amortized space overhead is typically small.

\section{Stochastic Sampling Circuits}\label{sec:sample_circuit}

We demonstrate how sampling via level functions $\ell_G$ 
can be used in a more general context. 
Just as currents and voltages can 
represent signals/numbers, one may consider 
an exponential random variable $\sim \Exp(\lambda)$ 
as a \emph{signal} carrying information about its 
rate $\lambda$. 
These signals can be summed, scaled, and transformed as follows.
\begin{description}
    \item[Summation.] Given $Y_1\sim \Exp(\lambda_1),Y_2\sim \Exp(\lambda_2)$, $\min(Y_1,Y_2)\sim \Exp(\lambda_1+\lambda_2)$.
    \item[Scaling.] Fix a scalar $\alpha>0$. 
    Given $Y\sim \Exp(\lambda)$, $Y/\alpha \sim \Exp(\alpha\lambda)$.
    \item[$G$-transformation.] Given $Z\sim \Exp(\lambda)$ and $Y\sim \Uniform(0,1)$, $\ell_G(Z,Y)\sim \Exp(G(\lambda))$.
\end{description}

\medskip 

A \emph{stochastic sampling circuit} is an object that
uses \emph{summation}, \emph{scaling}, and 
\emph{$G$-transformation} gates to sample according
to functions with potentially many inputs.
Such a circuit is represented by a directed acyclic graph 
$(V,E)$ where $V$ is the set of \emph{gates} 
and $E\subset V\times V$ is a set of \emph{wires}. 
There are four types of gates.
\begin{itemize}
    \item An \emph{input-gate} $u$ receives a stream of 
    incremental updates. 
    Whenever it receives $\Delta>0$, 
    it generates, 
for \emph{each} outgoing edge $(u,v)$, a freshly sampled i.i.d.~$Y\sim \Exp(1)$ 
random variable and sends $Y/\Delta$ to $v$.
    \item A \emph{scalar-gate} $v$ is parameterized by a fixed $\alpha>0$, and has a unique predecessor $v'$ and successor $v''$, 
    $(v',v),(v,v'')\in E$.  
    Whenever $v$ receives a $y\in\R_+$ from $v'$ is sends
    $y/\alpha$ to $v''$.
    \item A \emph{$G$-gate} $v$ has a unique 
    successor $v'$, $(v,v')\in E$.  It is initialized with a random seed $U\sim\Uniform(0,1)$.  Whenever $v$ receives a number $y\in\R_+$ from a predecessor $u$, 
    it sends $\ell_G(y,U)$ to $v'$, 
    where $\ell_G$ is the \Levy-induced 
    level function of $G\in\mathcal{G}$. 
    \item An \emph{output-gate} stores a pair $(v_*,h_*)$, initialized as $(\perp,\infty)$. Whenever an output-gate receives a number $y\in \R_+$ from a predecessor 
    with id $v\in V$, 
    if $y<h_*$, then it sets $(v_*,h_*)\gets (v,y)$.
\end{itemize}

The restriction that $G$-gates and scalar-gates have 
only one successor guarantees that the numbers 
received by one gate from different wires are independent. 
The generic $G$-\Sampler{} (\cref{alg:G-sampler}) 
can be viewed as a \emph{flat} stochastic sampling circuit (\cref{fig:G_circuit}), where each element $v\in[n]$ has its own input-gate and $G$-gate.   We could just as easily assign each input gate $v$ to 
a $G_v$-gate, $G_v\in\mathcal{G}$, which would result in a \emph{heterogeneous} sampler, where $v$ is sampled with probability
$G_v(\mathbf{x}(v))/\sum_u G_u(\mathbf{x}(u))$.

\begin{figure}
    \centering
    \includegraphics{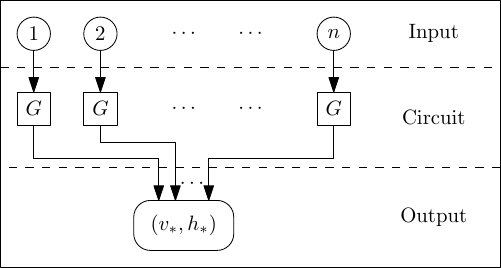}
    \caption{The flat stochastic sampling circuit corresponding to the generic $G$-\Sampler{} (\cref{alg:G-sampler}).}
    \label{fig:G_circuit}
\end{figure}

    \begin{figure}
    \centering
    \includegraphics[width=0.6\textwidth]{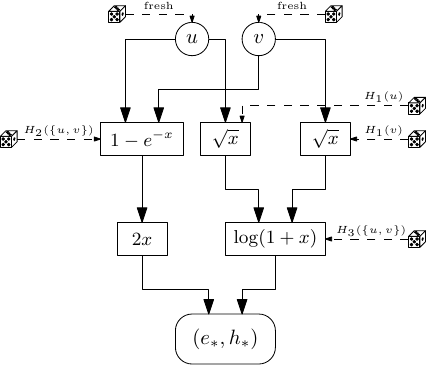}
    \caption{Circuit diagram for $G$-\EdgeSampler{} (\cref{alg:G-edge-sampler})
    with $G(a,b) = \log(1+\sqrt{a}+\sqrt{b}) + 2(1-e^{-(a+b)})$.
    Depicted are two of the input gates ($u$ and $v$), the unique output gate, and all gates related to the sampling of edge $\{u,v\}$.
    The dice symbol indicates the random seed $U \sim \mathrm{Uniform}(0,1)$ used by
    each $G$-gate, as well as the ``fresh'' exponential random variables $Y\sim \Exp(1)$ generated by the input gates for each update and each output wire.
     The ``$2x$''-gate is a deterministic 
     scalar gate with $\alpha=2$.\label{fig:edge_sample}}
    \end{figure}

\medskip 
We illustrate how stochastic sampling circuits can be used
to sample an edge from a graph with probability 
proportional to its weight.  Let $H=([n],E)$ 
be a fixed graph and $\mathbf{x} \in \R_+^n$ be 
a vector of vertex weights subject to incremental updates.  
For a fixed edge-weight function 
$G : \R_+^2 \to \R_+$,
where the weight of $(u,v)$ is 
$G(\mathbf{x}(u),\mathbf{x}(v))$, 
we would like to select
an edge $(u_*,v_*)$ with probability exactly
$G(\mathbf{x}(u_*),\mathbf{x}(v_*))/\sum_{(u,v)\in E} 
G(\mathbf{x}(u),\mathbf{x}(v))$.
Our running example is a symmetric weight function
that exhibits summation, scalar multiplication, and 
a variety of $\mathcal{G}$-functions.
\[
G(a,b) = \log(1+\sqrt{a}+\sqrt{b}) + 2(1-e^{-(a+b)}).
\]
The stochastic sampling circuit corresponding to $G$
is depicted in \cref{fig:edge_sample}.  
It uses the following level and hash functions.

\begin{itemize}
      \item $\ell_1$, level function for $G_1(x)=\sqrt{x}$. See \cref{example:1/2-stable}.
      \item $\ell_2$, level function for $G_2(x)=1-e^{-x}$. See \cref{example-softcap}.
      \item $\ell_3$, level function for $G_3(x)=\log(1+x)$. See \cref{example:log-sampler}.
      \item $H_1:V\to (0,1)$ uniformly at random.
      \item $H_2, H_3 : E\to (0,1)$ uniformly at random.
\end{itemize}

The implementation of this circuit as a 
$G$-\EdgeSampler{} is given in \cref{alg:G-edge-sampler}.
Note that since $G$ is symmetric, we can regard 
$H$ as an undirected graph and let \textsc{Update}$(v,\Delta)$
treat all edges incident to $v$ in the same way.
If $G$ were not symmetric, the code for $G$-\EdgeSampler{}
would have two {\bfseries for} loops, 
one for outgoing edges $(v,u)\in E$, 
and one for incoming edges $(u,v)\in E$.

\alglanguage{pseudocode}
\begin{algorithm}[H]
\caption{$G$-{\EdgeSampler} for $G(a,b) = \log(1+\sqrt{a}+\sqrt{b})+2(1-e^{-(a+b)})$}\label{alg:G-edge-sampler}
{\bfseries Specifications:} $H=([n],E)$ is a fixed graph. 
The state is $(e_*,h_*)\in {[n]\choose 2}\cup\{\perp\}\times \R_+\cup\{\infty\}$, initially $(\perp,\infty)$. 
After processing a stream of updates,
        $\pr(e_*=\{u_*,v_*\}) = G(\mathbf{x}(u_*),\mathbf{x}(v_*))/\sum_{\{u,v\}\in E}G(\mathbf{x}(u),\mathbf{x}(v))$.\vspace{.2cm}
\begin{algorithmic}[1]
\Procedure{\Update}{$v,\Delta$} \Comment{$\mathbf{x}(v) \gets \mathbf{x}(v)+\Delta$}
\For{each edge $\{u,v\}$ adjacent to $v$}
    \State Generate fresh, independent $Y_1,Y_2\sim \Exp(1)$
    \State $h\gets \min\{\ell_3(\ell_1(Y_1/\Delta,H_1(v)),H_3(\{u,v\})),\, \ell_2(Y_2/\Delta, H_2(\{u,v\}))/2\}$
    \If{$h < h_*$}
        \State $(e_*,h_*) \gets (\{u,v\}, h)$
    \EndIf
\EndFor
\EndProcedure
\end{algorithmic}
\end{algorithm}

\section{Conclusion}\label{sec:conclusion}

    In this paper we developed very simple sketches for 
    perfect $G$-sampling using $O(\log n)$ bits
    and universal perfect $\mathcal{G}$-sampling using $O(\log^2 n)$ bits.
    (See \cref{rem:precision} in \cref{sec:proofs-main-lemma-theorems} for a discussion of \emph{truly perfect} implementations.)
    They were made possible by the explicit connection between the class $\mathcal{G}$ and the Laplace exponents of non-negative \Levy{} processes via the \Levy-Khintchine representation theorem (\cref{thm:lk}).  
    To our knowledge this is the first explicit 
    use of the \Levy-Khintchine theorem in algorithm design, 
    though the class $\mathcal{G}$ was investigated 
    without using this connection, by Cohen~\cite{Cohen17,Cohen18,Cohen23-sampling-survey} 
    and Cohen and Geri~\cite{CohenG19}.
    A natural question is whether $\mathcal{G}$ captures all functions that have minimal-size $O(\log n)$-bit perfect samplers.

\begin{conjecture}
    Suppose $\mathbf{x} \in \R_+^n$ 
    is updated by an incremental stream.
    If there is an $O(\log n)$-bit 
    perfect $G$-sampler in the random oracle model
    (i.e., an index $v\in[n]$ is sampled
    with probability $G(\mathbf{x}(v))/G(\mathbf{x}) \pm 1/\poly(n)$), 
    then $G\in\mathcal{G}$.
\end{conjecture}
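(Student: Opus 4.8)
The characterization to aim for is classical: $G\in\mathcal{G}$ if and only if $G$ (extended by $G(0):=0$) is a \emph{Bernstein function}, equivalently, $z\mapsto e^{-tG(z)}$ is completely monotone on $(0,\infty)$ for every $t\ge 0$. This equivalence is exactly the Bernstein-function form of \cref{thm:lk}, and by Bernstein's theorem complete monotonicity means $e^{-tG}$ is the Laplace transform of a (sub-)probability measure, i.e.\ it is the marginal of a non-negative \Levy{} process. So the plan is to show that a perfect $O(\log n)$-bit $G$-sampler forces $e^{-tG(z)}$ to be, for each fixed $t>0$, the Laplace transform of a sub-probability measure, and then invoke Bernstein's theorem and \cref{thm:lk} to conclude $G\in\mathcal{G}$. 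A preliminary remark guides the whole argument: the pure input--output behavior of a perfect $G$-sampler is well defined for \emph{any} $G\ge 0$ with $G(0)=0$ (just sample $v$ with probability $G(\mathbf{x}(v))/G(\mathbf{x})$), so no argument based solely on the induced output distribution can ever pin $G$ down to $\mathcal{G}$. All of the force must come from the $O(\log n)$-bit space constraint, so what we are really after is a \emph{structure theorem} for small-space perfect incremental samplers.

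The main line of attack, then, is: (i) prove that every $O(\log n)$-bit perfect incremental sampler must — at least in distribution, up to $1/\poly(n)$ — behave like the generic $G$-\Sampler{} of \cref{alg:G-sampler}, i.e.\ assign to each index a real-valued ``clock'' depending only on that index's weight and fresh per-update randomness, and output the argmin; and (ii) show that the rate function of such a clock must be Bernstein. For (i) I would first use the random-oracle model to make per-index contributions computable independently, then run a two-party / information-theoretic argument: partition $[n]$ into blocks, and argue that the $O(\log n)$-bit sketch restricted to a block is, up to $1/\poly(n)$, a sufficient statistic that must ``combine correctly'' against \emph{every} possible sketch of the complementary block; the only such form is a scalar clock, because an arbitrary pair of block-sketches must merge into one yielding $G(\mathbf{x}(v))/G(\mathbf{x})$ for all weightings. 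A reservoir reduction is the natural gadget here: put a distinguished index $\star$ at weight $z$ against $m$ background indices, so the sketch of $\star$ races a single clock of aggregate rate $\sum_{\text{bg}}G(\cdot)$. For (ii), note that $\star$'s clock is built incrementally from its own updates $\Delta_1,\Delta_2,\dots$ via fresh independent randomness and a $2$D-monotone composition (exactly the situation of \cref{lem:level}); consistency under arbitrary batching of those updates — the final vector, hence the final distribution, is batching-invariant — forces the survival function $w\mapsto e^{-wG(\sum_i\Delta_i)}$ to be infinitely divisible as a Laplace transform, which is precisely the Bernstein condition. A final step lets $n\to\infty$ so the $1/\poly(n)$ slack vanishes, and a weak-compactness argument extracts the \Levy{} measure $\nu$ (and the constants $c,\gamma_0$) realizing $G$, giving $G\in\mathcal{G}$.

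The step I expect to be the main obstacle is exactly (i), the structure theorem. An arbitrary $O(\log n)$-bit algorithm is free to store something cleverer than a single scalar score — some correlated summary across indices — and ruling out such ``non-separable'' strategies is the real content; the na\"ive batching-invariance observation gives nothing by itself, since it constrains only the (unconstrained) output distribution. A promising alternative, possibly more tractable, is to prove the contrapositive directly as a space lower bound: if $G$ is \emph{not} Bernstein (say $\log G$ fails complete monotonicity at some scale), then splitting a single heavy element into many light updates produces a $G$-mass profile whose evolution a mergeable $o(\text{poly}(n))$-bit sketch provably cannot track, and one reduces from a hard one-way communication problem (in the spirit of the $\Omega(\log^2 n)$ lower bounds for $L_p$-sampling). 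This route only requires exhibiting \emph{one} bad family of instances for each $G\notin\mathcal{G}$, rather than a full structural classification of small-space samplers, and I would pursue it in parallel.
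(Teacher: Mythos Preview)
The statement you are attempting is \emph{Conjecture~1} in the paper, and the paper does \emph{not} prove it: it is stated in the Conclusion as an open problem motivating future work. There is therefore no ``paper's own proof'' to compare your proposal against.

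That said, your write-up is a research outline rather than a proof, and you correctly identify where the real gap lies. Step~(ii) --- that a mergeable, batching-invariant scalar clock forces $e^{-tG}$ to be completely monotone, hence $G$ Bernstein, hence $G\in\mathcal{G}$ via \cref{thm:lk} --- is essentially fine and is morally the converse of \cref{lem:level}. The entire weight of the conjecture rests on step~(i), the structure theorem asserting that any $O(\log n)$-bit perfect sampler must, up to $1/\poly(n)$, decompose into independent per-index scalar clocks combined by argmin. Your sketch for (i) (partition into blocks, argue the block sketch is a sufficient statistic that must merge correctly against \emph{every} complementary sketch, conclude the only consistent form is a scalar) is not yet an argument: it does not rule out sketches that store, say, a pair of correlated scores, or a small priority queue, or any of the other $2^{O(\log n)}$ possible states that are not a single real. ``The only such form is a scalar clock'' is precisely the statement to be proved, and nothing in the proposal forces it. The contrapositive route you suggest --- for each $G\notin\mathcal{G}$ exhibit a hard family and reduce from a one-way communication problem --- is more promising because it only needs one witness per bad $G$, but you have not specified which communication problem, nor how a failure of complete monotonicity of $\log G$ at some scale translates into a distinguishing instance; this is the step that would actually resolve the conjecture, and it remains entirely open in your proposal as in the paper.
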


    Recall that $\mathcal{G}$ is in correspondence with 
    \emph{non-negative}, one-dimensional \Levy{} processes,
    which is just a small subset of all 
    \Levy{} processes.  It leaves out processes over $\R^d$,
    compound Poisson processes whose jump 
    distribution includes positive and negative jumps, 
    and $p$-stable processes for $p\in[1,2]$, among
    others.  Exploring the connection between general 
    \Levy{} processes, \Levy-Khintchine representation, 
    and data sketches is a promising direction for 
    future research.

\ignore{
Most prior samplers (cite) are \emph{approximate} in the sense that the probability $v\in [n]$ gets sampled is in the range $(1\pm \epsilon)G(x_v)/\sum_{u\in [n]}G(x_u)+1/\mathrm{poly}(n)$. Jayaram, Woodruff, and Zhou developed the first non-trivial $G$-sampler whose sampling probability is precisely correct but  the sampler takes $\Omega(\frac{\sum_{v\in[n]}x_v}{\sum_{v\in[n]}G(x_v)})$ space. The term $\frac{\sum_{v\in[n]}x_v}{\sum_{v\in[n]}G(x_v)}$ is in general $\Omega(\mathrm{poly} (n))$.\footnote{For example, take $x_1=\ldots=x_n=n$ and $G(z)=\sqrt{z}$ and then $\frac{\sum_{v\in[n]}x_v}{\sum_{v\in[n]}G(x_v)}=\sqrt{n}$.} 
So far, there are only two known samplers with precisely correct sampling probability that use $O(\mathrm{polylog}(n))$ space.
\begin{itemize}
    \item $F_0$-sampler, a.k.a.~distinct sampler, which is a $G$-sampler with $G(z)=\ind{z>0}$;
    \item $F_1$-sampler, a.k.a.~reservoir sampler, which is a $G$-sampler with $G(z)=z$.
\end{itemize}

\begin{sketch}[$F_0$-sampler, \Min{} sketch \cite{cohen1997size}]\label{skh:f0}
    The $F_0$-sampler is a pair $(v_*,h_*)$ initialized as $v_*=\texttt{NULL}$ and $h_*=\infty$. Let $H:[n]\to \mathrm{Uniform(0,1)}$ be a hash function. Upon inserting element $u\in [n]$,
    \begin{itemize}
    \item Hash the element $u$ and get $H(u)$.
        \item if $H(u)<h_*$, then $(v_*,h_*)\gets (u,H(u))$.
    \end{itemize}
\end{sketch}

Traditionally, $F_1$-sampler is done with reservoir sampling \cite{vitter1985random}. Here we will present a variant of it that is similar to the form of the $F_0$-sampler.
\begin{sketch}[$F_1$-sampler, min-based reservoir sampling]\label{skh:f1}
    The $F_1$-sampler is a pair $(v_*,h_*)$ initialized as $v_*=\texttt{NULL}$ and $h_*=\infty$. Upon inserting element $u\in[n]$,
    \begin{itemize}
    \item Freshly sample a standard exponential random variable $Y$.
        \item if $Y<h_*$, then $(v_*,h_*)\gets (u,Y)$.
    \end{itemize}
\end{sketch}

The $F_0$-sampler (Sketch \ref{skh:f0}) uses only shared randomness (the hash value $H(u)$) for each insertion and therefore the sketch is completely insensitive to duplicates. On the other hand, the $F_1$-sampler (Sketch \ref{skh:f1}) uses only fresh randomness for each insertion (the freshly sampled $Y$) and therefore the sketch is insensitive to elements' identities and mixes all elements together. The intuition is to combine the well-behaved $F_0$-sampler and $F_1$-sampler to build generic $G$-samplers. Of course, it is not hard to maintain a $G$-sample with $G(z)=\alpha \ind{z>0} + \beta z$ for $\alpha,\beta >0$.

\begin{sketch}[$(\alpha F_0+\beta F_1)$-sampler]
    The $(\alpha F_0+\beta F_1)$-sampler is a pair $(v_*,h_*)$ initialized as $v_*=\texttt{NULL}$ and $h_*=\infty$.  Let $H:[n]\to \mathrm{Uniform(0,1)}$ be a hash function. Upon inserting element $u\in[n]$,
    \begin{itemize}
    \item Hash the element $u$ and get $H(u)$.
    \item Freshly sample a standard exponential random variable $Y$.
    \item Compute $h=\min(\beta^{-1}Y,-\alpha^{-1}\log H(u))$.
        \item if $h<h_*$, then $(v_*,h_*)\gets (u,h)$.
    \end{itemize}    
\end{sketch}
Note that since $Y\sim \Exp(1)$ and $H(u)\sim \mathrm{Uniform}(0,1)$, we have $\beta^{-1}Y\sim \Exp(\beta)$ and $-\alpha^{-1}\log H(u) \sim \Exp(\alpha)$. It is straightforward to check that the sampler above does return a $G$-sample with correct probability where $G(z)=\alpha \ind{z>0} + \beta z$. Is it possible to maintain a $G$-sample for more complicated $G$ with a less trivial way to combine the shared randomness and the fresh randomness? Let $\mathcal{G}=\{G:\N\to \R_+,G(z)=c\ind{z>0}+\gamma_0z+\int_0^\infty (1-e^{-rz})\,\nu(dr) \mid c,\gamma_0\geq 0,\nu\text{ is positive}\}$. The answer lies in the following lemma.
\begin{lemma}[level functions]\label{lem:level}
For any function $G\in \mathcal{G}$, there exists a (deterministic) function $\ell_G:(0,\infty) \times (0,1) \to \R_+$ such that
\begin{description}
    \item[2D-monotonicity] for any $a,a'\in \R_+$ and $b,b'\in[0,1]$, $a\leq a'$ and $b\leq b'$ implies $\ell_G(a,b)\leq \ell_G(a',b')$;
    \item[$G$-transformation] if $Y\sim \Exp(\lambda)$ and $U\sim \mathrm{Uniform}(0,1)$, then $\ell_G(Y,U)\sim \Exp(G(\lambda))$.
\end{description}
\end{lemma}

We call $\ell_G$ the \emph{level function} of $G$. Cohen in \cite{cohen2018stream} proposed the generic framework of using \emph{score functions} $g:\R_+\times \R_+\to \R_+$ to combine the hash values and fresh randomness. The level functions in \cref{lem:level} are a special subset of score functions with the properties of 2D-monotonicity and $G$-transformation. The main challenge is thus the construction of $\ell_G$ which we will solve in Section \ref{sec:level}. Nevertheless, once such level functions are available, the $G$-samplers and the universal sampler are straightforward to construct. Next, with Lemma \ref{lem:level}, we are going to show how the $G$-transformation property of the level function $\ell_G$ directly leads to a $O(1)$-word size $G$-sampler with exactly correct sampling probability, and how the 2D-monotonicity property leads to a universal sampler with expected $O(\log n)$-word size. 
Throughout this work, we assume a perfect hash function $H$ is provided such that for any  $u\in [n]$, $H(u)$ is uniformly at random over $(0,1)$ and the hash values of different keys are mutually independent. Furthermore, each element's identifier and every hash value are assumed to be of 1-word size.

\subsection{$G$-samplers}

\begin{sketch}[$G$-sampler]\label{sk:g}
    The $G$-sampler is a pair $(v_*,h_*)$ initialized as $v_*=\texttt{NULL}$ and $h_*=\infty$.  Let $H:[n]\to \mathrm{Uniform(0,1)}$ be a hash function. Upon inserting element $u\in[n]$,
    \begin{itemize}
    \item Hash the element $u$ and get $H(u)$.
    \item Freshly sample a standard exponential random variable $Y$.
    \item Compute $h=\ell_G(Y,H(u))$.
        \item if $h<h_*$, then $(v_*,h_*)\gets (u,h)$.
    \end{itemize}    
\end{sketch}
\begin{theorem}[$G$-sampler]
For any frequency vector $x=(x_1,x_2,\ldots,x_n)\in \N^n$, the memory state $(v_*,h_*)$ of the $G$-sampler (Sketch \ref{sk:g}) distributes as follows
\begin{itemize}
    \item for any $u\in[n]$, $\pr(v_*=u)=G(x_u)/\sum_{v\in [n]}G(x_v)$;
    \item $h_*\sim \Exp(\sum_{v\in [n]}G(x_v))$.
\end{itemize}
\end{theorem}
\begin{proof}
For $v\in [n]$, let $h_v$ be the smallest value produced by the 
element $v$. Let $Y_1,\ldots,Y_{x_v}$ be i.i.d.~$\Exp(1)$ random variables. Then we know
\begin{align*}
    h_v &\sim \min (\ell_G(Y_1,H(v)),\ldots \ell_G(Y_{x_v},H(v))) 
    \intertext{by the 2D-monotonicity}
    &= \ell_G(\min (Y_1,\ldots,Y_{x_v}),H(v))) 
    \intertext{note that $\min (Y_1,\ldots,Y_{x_v})\sim \Exp(x_v)$ and $H(v)\sim \mathrm{Uniform}(0,1)$, by the $G$-transformation}
    &\sim \Exp(G(x_v)).
\end{align*}
By the properties of exponential random variable, it is then straightforward that the overall minimum value $h_*\sim \Exp(\sum_{v\in [n]}G(x_v))$ and the probability of $h_u$ being the smallest (thus $u$ being sampled) is equal to $G(x_u)/\sum_{v\in [n]}G(x_v)$.
\end{proof}

It turns out for the function $G(z)=\sqrt{z}$, i.e., $F_{1/2}$-sampling, the level function is particularly simple. Specifically, we have $\ell_{\sqrt{z}}(a,b)=\sqrt{2a}\cdot\mathrm{erf}^{-1}(b) $ where $\mathrm{erf}(r)=2\pi^{-1/2}\int_0^z e^{-t^2}\,dt$ is the \emph{error function}. The inverse error function $\mathrm{erf}^{-1}$ is available, e.g., as $\mathtt{scipy.special.erfinv}$ in \texttt{Python}. Thus we arrive at the following clean and correct $F_{1/2}$-sampler.

\begin{sketch}[$F_{1/2}$-sampler]\label{skh:f12}
    The $F_{1/2}$-sampler is a pair $(v_*,h_*)$ initialized as $v_*=\texttt{NULL}$ and $h_*=\infty$. Let $H:[n]\to \mathrm{Uniform(0,1)}$ be a hash function. Upon inserting element $u\in[n]$,
    \begin{itemize}
    \item Hash the element $u$ and get $H(u)$.
    \item Freshly sample a standard exponential random variable $Y$.
    \item Compute $h= {\sqrt{2Y}\cdot \mathrm{erf}^{-1}(H(u))}$.
        \item if $h<h_*$, then $(v_*,h_*)\gets (u,h)$.
    \end{itemize}
\end{sketch}

\subsection{The Pareto Sampler}
Similar to how Cohen \cite{cohen2018stream} estimates $\sum_{v\in[n]}G(x_v)$ for any $G\in \mathcal{G}$ with a single sketch, the $G$-samplers can be naturally combined into a single universal sampler so that a $G$-sample can be returned for any $G\in \mathcal{G}$. 

Suppose the insertion stream is $v_1,v_2,\ldots,v_m$ where $v_t\in[n]$ is the element inserted at time $t\in[m]$. Let $Y_1,Y_2,\ldots,Y_m$ be a sequence of i.i.d.~$\Exp(1)$ random variables, where $Y_t$ is the fresh exponential random variable used at time $t$. Let $H:[n]\to \mathrm{Uniform}(0,1)$ be the hash function. Consider the set of points $S=\{(Y_t,H(v_t))\mid t\in[m]\}$. By construction, the $G$-sampler chooses the point $p_G\in S$ that minimize $\ell_G$, i.e., $\ell_G(p_G)=\min_{p\in S}\ell_G(p)$. The observation is that, since $\ell_G$ is 2D-monotonic, one always has $p_G \in \mathrm{Pareto}(S)$,  for any $G\in \mathcal{G}$. Therefore, it suffices to maintain the Pareto frontier of all the points.
\begin{definition}[Pareto frontier]
Let $A\subset \R^2$ be a finite set. The \emph{Pareto frontier} of $A$ is defined as
\begin{align*}
    \mathrm{Pareto}(A) &= \{(a,b)\in A \mid \forall (a',b')\in A\setminus\{(a,b)\}, a\geq a'\implies b<b'\}.
\end{align*}
\end{definition}

\begin{sketch}[Pareto sampler]\label{sk:pareto}
The Pareto sampler is a list of points $S$, in which each point is identified by an element $u\in[n]$, initialized as empty list. Let $H:[n]\to \mathrm{Uniform(0,1)}$ be a hash function. Upon inserting element $u\in[n]$,
    \begin{itemize}
    \item Hash the element $u$ and get $H(u)$.
    \item Freshly sample a standard exponential random variable $Y$.
    \item $S\gets \mathrm{Pareto}(S\cup \{(Y,H(u))_{\mathrm{id}=u}\})$.
    \end{itemize}
Upon query a $G$-sample for $G\in \mathcal{G}$,
\begin{itemize}
    \item Initialize $(v_*,h_*)=(\texttt{NULL},\infty)$.
    \item For each point $(a,b)_{\mathrm{id}=u}$ in $S$,
    \begin{itemize}
        \item compute $h=\ell_G(a,b)$;
        \item if $h<h_*$ then $(v_*,h_*)\gets(u,h)$.
    \end{itemize}
    \item return $(v_*,h_*)$
\end{itemize}
\end{sketch}
\begin{remark}
Though in a different narrative, this data structure of maintaining a Pareto frontier has essentially been invented by Cohen \cite{cohen2018stream}, where she used a ``width-$k$'' Pareto frontier\footnote{A ``width-$k$'' Pareto frontier of a point set $A$ is the subset $S$ of point where for every point $(a,b)\in S$, there are at most $k-1$ other points $(a',b')$ such that $(a\geq a' \land b\geq b')$. } to estimate the $G$-moment ($\sum_{v\in[n]} G(x_v)$) for any $G\in \mathcal{G}$. The fresh new ingredient here is the query function, which uses the level functions in \cref{lem:level}. 
\end{remark}

The Pareto sampler is clearly parallelable. From the next lemma, we see that given an insertion stream $v_1,\ldots,v_m$, the final state of the Pareto sampler distributes as $\mathrm{Pareto}(\{(Y_t,H(v_t))\mid t\in[m]\})$ regardless the order of the stream.
\begin{lemma}
For any finite sets $A,B\subset \R^2$, we have
\begin{align*}
 \mathrm{Pareto}(A\cup B) &= \mathrm{Pareto}(\mathrm{Pareto}(A) \cup B) \\
 &= \mathrm{Pareto}(\mathrm{Pareto}(A) \cup \mathrm{Pareto}(B))
\end{align*}
\end{lemma}

Finally, we show that the number of points in the Pareto sampler is $O(\log n)$ on average. It suffices to show the next lemma.
\begin{lemma}
Let $U_1,\ldots,U_n$ be i.i.d.~$\mathrm{Uniform}(0,1)$ random variables, for any distinct $y_1,\ldots,y_n$, $\E |\mathrm{Pareto}(\{(y_v,U_v)\mid v\in[n]\})|=\sum_{v=1}^n v^{-1}=O(\log n)$.
\end{lemma}
\begin{proof}
    Without loss of generality, assume $y_1<y_2<\ldots <y_n$. The probability that $y_v \in \mathrm{Pareto}(\{(y_v,U_v)\mid v\in[n]\})$ is equal to the probability that $U_v<\min(U_1,\ldots,U_{v-1})$ which is equal to $v^{-1}$ by symmetry. The expected size is thus obtained from the linearity of expectation.
\end{proof}

We summarize the Pareto sampler in the following theorem.
\begin{theorem}[Pareto sampler]
For any frequency vector $x=(x_1,x_2,\ldots,x_n)\in \N^n$, the point list $S$ of the Pareto sampler (Sketch \ref{sk:pareto}) has $O(\log n)$ points in expectation. Upon a query of a $G$-sample, the pair $(v_*,h_*)$ returned by the Pareto sampler distributes as follows.
\begin{itemize}
    \item for any $u\in[n]$, $\pr(v_*=u)=G(x_u)/\sum_{v\in [n]}G(x_v)$;
    \item $h_*\sim \Exp(\sum_{v\in [n]}G(x_v))$.
\end{itemize}
\end{theorem}

\subsection{Related Work}
Cohen's soft sublinear concave functions.

Jayaram's truly perfect $G$-samplers.

\section{Construction of Level Functions}\label{sec:level}
\subsection{Mathematical Construction}
Though the set of functions $\mathcal{G}=\{G(z)=c\ind{z>0}+\gamma_0z+\int_0^\infty (1-e^{-rz})\,\nu(dr) \mid c,\gamma_0\geq 0,\nu\text{ is positive}\}$ have been used by Cohen in \cite{cohen2017hyperloglog} to approximate cap functions $\mathrm{cap}_\tau(z)=\min(\tau,z)$. A more fundamental role $\mathcal{G}$ plays has not been realized in the computer science community. The family $\mathcal{G}$ is \emph{bijective} to the family of non-negative L\'evy processes. No surprise, we will construct the level functions from this bijection.

\begin{definition}[non-negative L\'evy processes \cite{ken1999levy}]
    A random process $X=(X_t)_{t\in\R_+}$ is a non-negative L\'evy process if
    \begin{enumerate}[label=(\Alph*)]
        \item $X_t\in [0,\infty]$ for all $t\in \R_+$ (non-negativity);\footnote{We do allow the random process take value on $\infty$, which turns out to be meaningful and often useful for designing algorithms.}
        \item $X_{t+s}-X_{t}\sim X_s$ for all $ t,s\in \R_+$ (stationary increments);\label{item:stationary}
        \item for $0\leq t_1 < t_2\ldots <t_k$, $X_{t_1},X_{t_2}-X_{t_1},\ldots, X_{t_k}-X_{t_{k-1}}$ are mutually independent (independent increments); \label{item:independent}
        \item $X_0=0$ almost surely and $\lim_{t\searrow 0}\pr(X_t>\epsilon)=0$ for any $\epsilon>0$ (stochastic continuity).
    \end{enumerate}
\end{definition}

The bijection is given by the L\'evy-Khintchine representation theorem.
\begin{theorem}[L\'evy-Khintchine representation for non-negative L\'evy processes, see e.g., Ken-iti {\cite[Chapter~10]{ken1999levy}}]\label{thm:lk}
Any non-negative L\'evy process $X=(X_t)_{t\in\R_+}$ can be identified by a triple $(c,\gamma_0,\nu)$ where $c,\gamma_0\in\R_+$ and $\nu$ is a measure on $(0,\infty)$ such that
\begin{align}
    \int_{(0,\infty)}\min(r,1)\,\nu(d{r})<\infty.\label{eq:measure_cond}
\end{align}
The identification is through the Laplace transform. For any $t,z\in \R_+$
\begin{align}
    \E e^{-z X_t} &= \exp\left(-t \left(c\ind{z>0} + \gamma_0 z +\int_{(0,\infty)}(1-e^{-rz})\,\nu(dr)\right)\right). \label{eq:laplace}
\end{align}
Conversely, any triple $(c,\gamma_0,\nu)$ with $c,\gamma_0\in \R_+$ and $\nu$ satisfies (\ref{eq:measure_cond}) corresponds to a non-negative L\'evy process satisfying (\ref{eq:laplace}).
\end{theorem}
Let $X=(X_t)_{t\in\R_+}$ be a non-negative L\'evy process with representation $(c,\gamma_0,\nu)$. We define the following terms.
\begin{itemize}
    \item $c$ is the \emph{killing rate};
    \item $\gamma_0$ is the \emph{drift};
    \item $\nu$ is the \emph{L\'evy measure}.
\end{itemize}

\begin{definition}[L\'evy induced level function]\label{def:induced}
Let $G\in \mathcal{G}$ and $X=(X_t)_{t\in \R_+}$ be the corresponding non-negative L\'evy process (i.e., $\E e^{-zX_t}=e^{-t G(z)}$ for any $t,z\in \R_+$), the \emph{induced level function} $\ell_G:(0,\infty)\times (0,1)\to \R_+$ is defined as
\begin{align*}
    \ell_G(a,b) &= \inf\{t:\pr(X_t\geq a)\geq  b\},
\end{align*}
for any $a\in (0,\infty)$ and $b\in(0,1)$.
\end{definition}

We now show that the level functions defined above satisfy both the 2D-monotonicity and $G$-trsanformation properties in the main lemma (Lemma \ref{lem:level}).
\begin{proof}[Proof of Lemma \ref{lem:level}]
Recall that we have $Y\sim \Exp(\lambda)$, $U\sim \mathrm{Uniform}(0,1)$, and $G\in \mathcal{G}$. By L\'evy-Khintchine, let $X=(X_t)_{t\in \R_+}$ be the non-negative L\'evy process such that for any $t\in \R_+$, $\E e^{-z X_t}=e^{-tG(z)}$. Note that since L\'evy processes are memoryless, non-negative L\'evy processes are also \emph{non-decreasing} and therefore $\pr(X_t\geq a)$ is increasing in $t$ and decreasing in $a$. Therefore $\ell_G(a,b)=\inf\{t:\pr(X_t\geq a)\geq b\}$ is 2D-monotonic. Now we check that $\ell_G(Y,U)\sim \Exp(G(\lambda))$. For any $w>0$, we have 
    \begin{align*}
        \pr(\ell_G(Y,U)\geq w) &= \pr(\inf\{t:\pr(X_t\geq Y)\geq U\}\geq w)
        \intertext{Note that by the definition of L\'evy process, $\pr(X_t\geq Y)$ is a continuous function in $t$ and therefore $\inf\{t:\pr(X_t\geq Y)\geq U\}\geq w$ is equivalent to $\pr(X_w\geq Y)\leq U$. Thus we have}
        &=\pr(\pr(X_w\geq Y)\leq U)= 1-\pr(X_w\geq Y)=\pr(X_w< Y)\\
        &= \E (\pr(X_w< Y) |X_w)= \E (e^{-\lambda X_w} |X_w)= \E e^{-\lambda X_w}
        \intertext{by L\'evy-Khintchine}
        &= e^{-w G(\lambda)}.
    \end{align*}
Since $w$ is arbitrary, we conclude that $\ell_G(Y,U)\sim \Exp(G(\lambda))$.
\end{proof}

While we have already constructed level functions satisfying \cref{lem:level} \emph{mathematically} in \cref{def:induced}, we still need a way to \emph{algorithmically} compute $\ell_G$ for a target weight function $G\in \mathcal{G}$. Next, we are going to compute the level functions for a few common weight functions $G$. After that, we will give a scheme to treat generic $G\in \mathcal{G}$.
}


\appendix

\section{Sampling Without Replacement}\label{sec:wor}

We can take $k$ independent copies of the $G$-\Sampler{} or \ParetoSampler{} sketches to sample $k$ indices from the $(G(\mathbf{x}(v))/G(\mathbf{x}))_{v\in[n]}$ distribution \emph{with} replacement.  A small change to these algorithms will 
sample $k$ indices \emph{without} replacement.  
See Cohen, Pagh, and Woodruff~\cite{CohenPW20}
for an extensive discussion of why WOR (without replacement)
samplers are often more desirable in practice.
The algorithm $(G,k)$-\SamplerWOR (\cref{alg:G-sampler-WOR})
samples $k$ (distinct) indices without replacement.

\alglanguage{pseudocode}
\begin{algorithm}[H]
\caption{$(G,k)$-{\SamplerWOR}}\label{alg:G-sampler-WOR}
{\bfseries Specifications:} The state is a set 
$S \subset [n]\times \R_+$, initially empty.
The function $\KMIN(L)$ takes a list $L\subset [n]\times \R_+$,
discards any $(v,h)\in L$ if there is a $(v,h')\in L$ with $h'<h$, then returns the $k$ elements with the smallest second coordinate.\vspace{.2cm}
\begin{algorithmic}[1]
\Procedure{\Update}{$v,\Delta$} \Comment{$\mathbf{x}(v) \gets \mathbf{x}(v)+\Delta$}
\State Generate fresh $Y\sim \Exp(1)$.
\State $h \gets \ell_G(Y/\Delta,H(v))$ \Comment{$\ell_G$ is level function of $G$}
\State $S \gets \KMIN(S\cup \{(v,h)\})$
\EndProcedure
\end{algorithmic}
\end{algorithm}

In a similar fashion, one can define a sketch $k$-\ParetoSamplerWOR{} 
analogous to $(G,k)$-\SamplerWOR,
that maintains the minimum $k$-Pareto frontier,
defined by discarding any tuple $(a,b,v)$ if there is 
another $(a',b,v)$ with $a'<a$, then retaining
only those tuples that are dominated by at most $k-1$ other tuples.

\begin{theorem}\label{thm:WOR}
    Consider a stream a $\poly(n)$ incremental updates to a vector $\mathbf{x}\in\R_+^n$.
    The $(G,k)$-\SamplerWOR{} occupies $2k$ words of memory,
    and can report an ordered tuple $(v_*^{1},\ldots,v_*^{k}) \in [n]^k$
    such that 
\begin{align}
\pr((v_*^{1},\ldots,v_*^{k}) = (v^{1},\ldots,v^k))
= \prod_{i=1}^{k} \frac{G(\mathbf{x}(v^i))}{G(\mathbf{x}) - \sum_{j=1}^{i-1} G(\mathbf{x}(v^j))}.\label{eqn:WOR}
\end{align}
    The $k$-\ParetoSampler{} occupies $O(k\log n)$ words w.h.p.~and for any $G\in\mathcal{G}$ at query time,
    can report a tuple $(v_*^{1},\ldots,v_*^{k}) \in [n]^k$ distributed according to \cref{eqn:WOR}.
\end{theorem}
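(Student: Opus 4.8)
The plan is to reduce the without-replacement guarantee to the exponential-clock picture already established in the proof of \cref{thm:generic-G-sampler}. Recall from there that if we run $G$-\Sampler{} on the stream, then for each $v\in[n]$ the quantity $h_v \sim \Exp(G(\mathbf{x}(v)))$, and crucially these $n$ random variables are \emph{mutually independent} (they depend on disjoint sets of fresh exponentials $Y_i$ and on distinct hash values $H(v)$). The $(G,k)$-\SamplerWOR{} sketch, by the specification of $\KMIN$, stores exactly the $k$ smallest of the values $\{h_v\}_{v\in[n]}$ together with their indices, in sorted order; so the output tuple $(v_*^1,\ldots,v_*^k)$ is the ordering of the $k$ indices with smallest $h_v$. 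Thus it suffices to prove the following fact about independent exponentials: if $E_v \sim \Exp(\lambda_v)$ independently, with $\lambda_v = G(\mathbf{x}(v))$, then the sequence of indices in increasing order of $E_v$ is distributed as in \cref{eqn:WOR}.

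This is a standard property of exponential races, proved by induction on $k$ (equivalently on the number of "rounds"). First I would establish the base case: $\pr(v_*^1 = v^1) = \lambda_{v^1}/\sum_u \lambda_u = G(\mathbf{x}(v^1))/G(\mathbf{x})$, which is immediate from the memorylessness/minimum property of exponentials already invoked in the proof of \cref{thm:generic-G-sampler}. For the inductive step, I would condition on $\{v_*^1 = v^1\}$ and use the memoryless property: given that $E_{v^1}$ is the minimum and equals some value $t$, the residual variables $(E_v - t)_{v \neq v^1}$ are again independent exponentials with the \emph{same} rates $\lambda_v$, $v \neq v^1$, and are independent of $t$. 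Hence the conditional law of the remaining race among $[n]\setminus\{v^1\}$ is exactly an exponential race with rate vector $(\lambda_v)_{v\neq v^1}$, so the induction hypothesis gives the tail product $\prod_{i=2}^k G(\mathbf{x}(v^i)) / \big(G(\mathbf{x}) - \sum_{j=1}^{i-1} G(\mathbf{x}(v^j))\big)$; multiplying by the first factor yields \cref{eqn:WOR}. The space bound of $2k$ words is immediate since $S$ holds at most $k$ pairs $(v,h)$.

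For the $k$-\ParetoSampler{} claim, the plan mirrors the proof of \cref{thm:ParetoSampler}: by 2D-monotonicity of $\ell_G$ (\cref{lem:level}), for each index $v$ the update minimizing $\ell_G(Y_i/\Delta_i, H(v))$ over updates to $v$ corresponds to the update minimizing $Y_i/\Delta_i$, so the per-index winning tuples all lie on the minimum $k$-Pareto frontier of $\{(Y_i/\Delta_i, H(v_i), v_i)\}$ (after the deduplication step that keeps, for each $v$, only its smallest-$a$ tuple — a tuple not of this form is dominated in the first coordinate by another tuple with the same $v$, hence cannot be a per-index winner for any $G\in\mathcal{G}$). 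Therefore the $k$-Pareto structure retains enough information to recover, at query time, the $k$ indices of smallest $h_v = \ell_G(\cdot, H(v))$ in order, which by the argument above is distributed according to \cref{eqn:WOR}. The $O(k\log n)$ space bound follows from the same prefix-minima/Chernoff argument as in \cref{thm:ParetoSampler}: the $i$-th element (in hash order) survives in the $k$-Pareto frontier only if $h_{v_i}$ is among the $k$ smallest of $h_{v_1},\ldots,h_{v_i}$, an event of probability $\min\{1, k/i\}$, so the expected frontier size is $O(k\log n)$ and concentration plus a union bound over $\poly(n)$ updates finishes it.

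The main obstacle is getting the $k$-Pareto sufficiency argument exactly right — in particular, being careful that the deduplication rule (discard $(a,b,v)$ if some $(a',b,v)$ with $a'<a$ exists) correctly handles repeated updates to the same index, and that "dominated by at most $k-1$ others" is the precise condition ensuring every possible per-index winner for every $G\in\mathcal{G}$ is retained. The exponential-race induction itself is routine; the care lies in the order-statistics bookkeeping and in confirming that conditioning on the hash values $\{H(v)\}$ (as in \cref{thm:ParetoSampler}) does not disturb the distributional claim, since the WOR distribution \cref{eqn:WOR} is a statement about the unconditioned randomness.
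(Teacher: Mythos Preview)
Your proposal is correct and follows essentially the same approach as the paper: you reduce to the exponential-race picture via $h_v\sim\Exp(G(\mathbf{x}(v)))$ and the memoryless property to get \cref{eqn:WOR}, invoke 2D-monotonicity for $k$-Pareto sufficiency, and bound the frontier size by the same $\min\{1,k/i\}$ prefix-argument with Chernoff. The paper's writeup is terser but the ideas line up step for step.
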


\begin{proof}
    The proof of \cref{thm:generic-G-sampler} shows that
    $h_v \sim \Exp(G(\mathbf{x}(v)))$ and if 
    $v_*^1$ minimizes $h_{v_*^1}$, that
    $h_{v_*^1} \sim \Exp(G(\mathbf{x}))$. 
    It follows that $\pr(v_*^1=v)=G(\mathbf{x}(v))/G(\mathbf{x})$.
    By the memoryless property of the exponential distribution,
    for any $v\neq v_*^1,$ 
    $h_v - h_{v_*^1} \sim \Exp(G(\mathbf{x}(v)))$,
    hence $h_{v_*^2} \sim \Exp(G(\mathbf{x})-G(\mathbf{x}(v_*^1)))$ and 
    $\pr(v_*^2 = v \mid v_*^1, v\neq v_*^1) = G(\mathbf{x}(v))/(G(\mathbf{x})-G(\mathbf{x}(v_*^1)))$.  The distribution of $v_*^3,\ldots,v_*^k$ is analyzed in the same way.

    By the 2D-monotonicity property, 
    the $k$-Pareto frontier contains all the 
    points that would be returned by $(G,k)$-\SamplerWOR,
    hence the output distribution of $k$-\ParetoSamplerWOR{}
    is identical.  The analysis of the space bound follows
    the same lines, except that $X_i$ is the indicator
    for the event that $h_{v_i}$ is among the $k$-smallest
    elements of $\{h_{v_1},\ldots,h_{v_i}\}$, so 
    $\E(X_i)=\min\{k/i,1\}$, $\E(|S|) < kH_n$, 
    and by a Chernoff bound, $|S|=O(k\log n)$ 
    with high probability.
\end{proof}

\end{document}